\newcommand\vldbdoi{XX.XX/XXX.XX}
\newcommand\vldbpages{XXX-XXX}
\newcommand\vldbvolume{14}
\newcommand\vldbissue{1}
\newcommand\vldbyear{2025}
\newcommand\vldbauthors{\authors}
\newcommand\vldbtitle{\shorttitle} 
\newcommand\vldbavailabilityurl{https://github.com/zhangxiangzhi13/TRAPP}
\newcommand\vldbpagestyle{plain}
\newcommand{\removelatexerror}{\let\@latex@error\@gobble}
\newcommand{\kw}[1]{{\ensuremath{\mathsf{#1}}}\xspace}
\newcommand{\eat}[1]{}
\newcommand{\stitle}[1]{\vspace{1.6ex}\noindent{\bf #1}}
\newcommand{\etitle}[1]{\vspace{1ex}\noindent{\underline{\em #1}}}
\newcommand{\beqn}{\begin{eqnarray*}}
\newcommand{\eeqn}{\end{eqnarray*}}
\newcommand{\ie}{\emph{i.e.,}\xspace}
\newcommand{\eg}{\emph{e.g.,}\xspace}
\newcommand{\eop}{\hspace*{\fill}\mbox{$\Box$}}     
\newcounter{example}
\renewcommand{\theexample}{\arabic{example}}
\newenvironment{example}{
        \vspace{1ex}
        \refstepcounter{example}
        {\noindent\bf Example \theexample:}}{
        \eop\vspace{1ex}}
\newcounter{theorem}
\renewcommand{\thetheorem}{\arabic{theorem}}
\newenvironment{theorem}{\begin{em}
        \refstepcounter{theorem}
        {\vspace{1.5ex} \noindent\bf  Theorem  \thetheorem:}}{
        \end{em}\eop\vspace{1.5ex}} 
\newtheorem{definition}{Definition}
\newcommand{\bi}{\begin{itemize}}
\newcommand{\ei}{\end{itemize}}
        {\end{itemize}} 
\newcounter{lemma}
\renewcommand{\thelemma}{\arabic{theorem}}
\newenvironment{lemma}{\begin{em}
        \refstepcounter{theorem}
        {\vspace{1.5ex}\noindent\bf Lemma \thelemma:}}{
        \end{em}\eop\vspace{1.5ex}} 
\renewenvironment{proof}{
        \vspace{1ex}
        {\noindent\bf Proof:}}{\eop\vspace{1ex}}
\newcommand{\methodname}{\kw{TRAPP}}
\newcommand{\randomMethod}{\kw{Random}}
\newcommand{\allShortcutMethod}{\kw{All}}
\newcommand{\methodnameText}{\text{TRAPP}}
\newcommand{\storageOptimazation}{\text{shortcuts merger}\xspace}
\newcommand{\matchingOptimazation}{\text{shortcuts pre-sorting}}
\newcommand{\gsf}[1]{\textcolor{red}{#1}}
\newcommand{\yaof}[1]{\textcolor{cyan}{#1}}
\newcommand{\chz}[1]{\textcolor{black}{#1}}
\newcommand{\zyf}[1]{\textcolor{orange}{#1}}
\newcommand{\zyj}[1]{\textcolor{green}{#1}}
\begin{document}
\title{TRAPP: An Efficient Point-to-Point Path Planning Algorithm for Road Networks with Restrictions}




\author{Hanzhang Chen}
\affiliation{%
  \institution{Northeastern University}
  \city{Shenyang}
  \state{China}
}
\email{chenhz@stumail.neu.edu.cn}

\author{Xiangzhi Zhang}
\affiliation{%
  \institution{Northeastern University}
  \city{Shenyang}
  \state{China}
}
\email{zhangxz@stumail.neu.edu.cn}

\author{Shufeng Gong}
\affiliation{%
  \institution{Northeastern University}
  \city{Shenyang}
  \state{China}
}
\email{gongsf@mail.neu.edu.cn}

\author{Feng Yao}
\affiliation{%
  \institution{Northeastern University}
  \city{Shenyang}
  \state{China}
}
\email{yaofeng@stumail.neu.edu.cn}

\author{Song Yu}
\affiliation{%
  \institution{Northeastern University}
  \city{Shenyang}
  \state{China}
}
\email{yusong@stumail.neu.edu.cn}

\author{Yanfeng Zhang}
\affiliation{%
  \institution{Northeastern University}
  \city{Shenyang}
  \state{China}
}
\email{zhangyf@mail.neu.edu.cn}

\author{Ge Yu}
\affiliation{%
  \institution{Northeastern University}
  \city{Shenyang}
  \state{China}
}
\email{yuge@mail.neu.edu.cn}

\begin{abstract}


Path planning is a fundamental problem in road networks, with the goal of finding a path that optimizes objectives such as shortest distance or minimal travel time. Existing methods typically use graph indexing to ensure the efficiency of path planning. However, in real-world road networks, road segments may impose restrictions in terms of height, width, and weight. Most existing works ignore these road restrictions when building indices, which results in returning infeasible paths for vehicles. To address this, a naive approach is to build separate indices for each combination of different types of restrictions. However, this approach leads to a substantial number of indices, as the number of combinations grows explosively with the increase in different restrictions on road segments. 
In this paper, we propose a novel path planning method, \methodname (Traffic Restrictions Adaptive Path Planning algorithm), which utilizes traffic flow data from the road network to filter out rarely used road restriction combinations, retain frequently used road restriction combinations, and build indices for them. Additionally, we introduce two optimizations aimed at reducing redundant path information storage within the indices and enhancing the speed of index matching. Our experimental results on real-world road networks demonstrate that
\methodname can effectively reduce the computational and memory overhead associated with building indices while ensuring the efficiency of path planning.

\eat{
Path planning is a fundamental problem in road networks, with the goal of finding a path that optimizes objectives such as shortest distance or minimal travel time.
Existing methods typically use graph indexing to ensure the efficiency of path planning. 
However, in real-world road networks, road segments may impose restrictions in terms of height, width, and weight.
Most of the existing works ignore the road restrictions when building indices
, which results in returning an infeasible path. 
To ensure these index-based methods can return feasible paths, separate indices must be built for each combination of different types of restrictions. However, this approach results in a substantial number of indices, as the number of combinations grows explosively with the increase in road segments and restriction types.
In this paper, we propose a novel path planning method, \methodname (Traffic Restrictions Adaptive Path Planning algorithm), which utilizes traffic flow information of the road network to filter out rarely used road restriction combinations, retain frequently used road restriction combinations, and build indices for them. 
Our experimental results on real-world road networks show that \methodname ensures the efficiency and effectiveness of path planning while significantly reducing computational and memory overhead. \chz{add opt}
}
\end{abstract}
\maketitle

\pagestyle{\vldbpagestyle}
\vspace{-5pt}
\eat{
\begingroup\small\noindent\raggedright\textbf{PVLDB Reference Format:}\\
\vldbauthors. \vldbtitle. PVLDB, \vldbvolume(\vldbissue): \vldbpages, \vldbyear.\\
\href{https://doi.org/\vldbdoi}{doi:\vldbdoi}
\endgroup
\begingroup
\renewcommand\thefootnote{}\footnote{\noindent
This work is licensed under the Creative Commons BY-NC-ND 4.0 International License. Visit \url{https://creativecommons.org/licenses/by-nc-nd/4.0/} to view a copy of this license. For any use beyond those covered by this license, obtain permission by emailing \href{mailto:info@vldb.org}{info@vldb.org}. Copyright is held by the owner/author(s). Publication rights licensed to the VLDB Endowment. \\
\raggedright Proceedings of the VLDB Endowment, Vol. \vldbvolume, No. \vldbissue\ %
ISSN 2150-8097. \\
\href{https://doi.org/\vldbdoi}{doi:\vldbdoi} \\
}\addtocounter{footnote}{-1}\endgroup

\ifdefempty{\vldbavailabilityurl}{}{
\vspace{.3cm}
\begingroup\small\noindent\raggedright\textbf{\\PVLDB Artifact Availability:}\\
The source code, data, and/or other artifacts have been made available at \url{\vldbavailabilityurl}.
\endgroup
}
}





\section{Introduction}
\label{sec-intro}









Path planning is one of the fundamental operations in various fields, such as road networks~\cite{CRP,CH,H2H,AH,HH}, bioinformatics~\cite{bioinfo_1,bioinfo_2,bioinfo_3,bioinfo_4}, and robot navigation~\cite{robot_1,robot_2,robot_3,robot_4,robot_5}.
With the development of the transportation industry, the \eat{importance}issue of path planning on road networks has become increasingly critical.
In general, a road network can be denoted as $G=(V,E)$, where a vertex $v \in V$ represents an intersection and an edge $e \in E$ represents a road segment. Path planning in a given road network aims to find a path with the shortest distance or time between a given source vertex $s\in V$ and destination vertex $d\in V$. 
Although traditional path planning algorithms, such as Dijkstra's algorithm~\cite{Dijkstra59}, can correctly answer path queries in road networks, they need to traverse the entire road network, resulting in inefficiency for large-scale road networks. 
The A* search algorithm~\cite{AstarSearch} can reduce unnecessary traversals 
by utilizing the Euclidean distance from the current vertex to the destination vertex as a heuristic function. However, the path planning time remains significantly long. This is because when the algorithm steps into an area with dense road segments, such as the center of a city, it is required to evaluate all road segments to determine the shortest path through the dense area.

\eat{
Even using efficient algorithms such as A* 
algorithm~\cite{AstarSearch} to prune the invalid search space, the path query time is still very long. This is because when the algorithm steps into an area with dense road segments, such as the center of the city, it is required to try all road segments to determine the shortest path through the dense area.
and A* search algorithm~\cite{AstarSearch}

However, traditional path planning algorithms, such as Dijkstra's algorithm\cite{Dijkstra59}, are inefficient for large-scale road networks because they need to traverse the whole large-scale graph, 
resulting in long query times. Even using efficient algorithms such as A* 
algorithm~\cite{AstarSearch} to prune the invalid search space, the path query time is still very long. This is because when the algorithm steps into an area with dense road segments, such as the center of the city, it is required to try all road segments to determine the shortest path through the dense area. 
}
\eat{
Path planning 
aims to find the path with shortest distance or time between a source and target vertex in a graph.
It is widely used in many fields, such as transportation \cite{CRP,CH,H2H,AH,HH}, bioinformatics \cite{bioinfo_1,bioinfo_2}, and robot navigation \cite{robot_1,robot_2,robot_3,robot_4,robot_5}.
\eat{
\cite{DBLP:journals/access/GarciaA24,DBLP:journals/corr/abs-2303-03720,DBLP:journals/cgf/SharmaWK24,ouyang2020efficient,raja2012optimal,mena2016web,bertsekas1982dynamic}.
}
}
To enhance the efficiency of path planning, various index-based path planning methods are proposed in the literature~\cite{DBLP:journals/pvldb/RiceT10,chen2023fully,tan2003shortest,LSD-index,symmetry,CH,CRP,H2H,HH,AH,HiTi,TEDI}. \eat{ to address the above challenge} These methods pre-compute the shortest paths between selected vertices in a given road network, then store and maintain these paths within an index structure to reduce the time required for path planning. CRP (Customizable Route Planning)~\cite{CRP} is one of the state-of-the-art index-based path planning algorithms.
It partitions the entire road network into multiple dense cells and stores pre-computed shortest paths between the entry and exit vertices of each cell as \textit{shortcuts} (a type of index structure). 
These shortcuts are utilized to bypass the path search in dense areas.

However, in real-world road networks, there may be restrictions on factors such as height, width, and weight~\cite{RoadrestrictionsOverview_1,Roadrestrictionsweight,al2017heavy,zhu2014vehicle}.
This renders index-based methods inapplicable for directly answering path queries in road networks with restrictions. This is because a query vehicle's height, width, and weight may be larger than the corresponding restrictions represented by the shortest path, leading to infeasible shortcut for the vehicle.
%

For example, as shown in Figure \ref{fig:road_restriction}a, a typical shortcut from $v_7$ to $v_6$ that does not consider restrictions will point to the shortest path $v_7 \rightarrow v_4 \rightarrow v_6$. However, for a vehicle with a height of 2.5, a width of 2.0, and a weight of 3.0, this shortcut is infeasible because the edge $(v_4, v_6)$ has a height restriction of 1.8, as shown in Figure \ref{fig:road_restriction}b, which is lower than the height of the vehicle.

\eat{
This may lead to the shortest path planned by existing 
path planning methods like CH and CRP 
being infeasible due to road restrictions if these methods do not assume any restrictions on edges when building shortcuts. 

To address the above challenge, \hl{graph indexing are widely applied in path planning to reduce path planning time}~\cite{DBLP:journals/pvldb/RiceT10,chen2023fully,tan2003shortest,LSD-index,symmetry,CH,CRP,H2H,HH,AH}. 
CRP is a classical index-based path planning algorithm. It stores the pre-computed shortest path between the entry and exit vertices \gsf{what's entry and exit vertices? whose entry and exit?} as shortcuts (a type of index structure) and utilize the shortcuts to bypass the path search in the dense area.
}

\eat{
Currently, graph indexing techniques are widely applied in path planning methods \gsf{why is it necessary to use index?} \cite{DBLP:journals/pvldb/RiceT10,zhang2021efficient,zhang2022relative,chen2023fully,xiao2009efficiently,tan2003shortest}, where these methods store pre-computed shortest paths between some vertices as indices (\textbf{aka} shortcuts) and utilize these indices to prune the search space. This leads to a significant improvement in efficiency compared to the 
Dijkstra's algorithm~\cite{Dijkstra:1979:GSC:1241515.1241518}.
}

\begin{figure*}[t] 
  \centering
  \includegraphics[width=\textwidth]{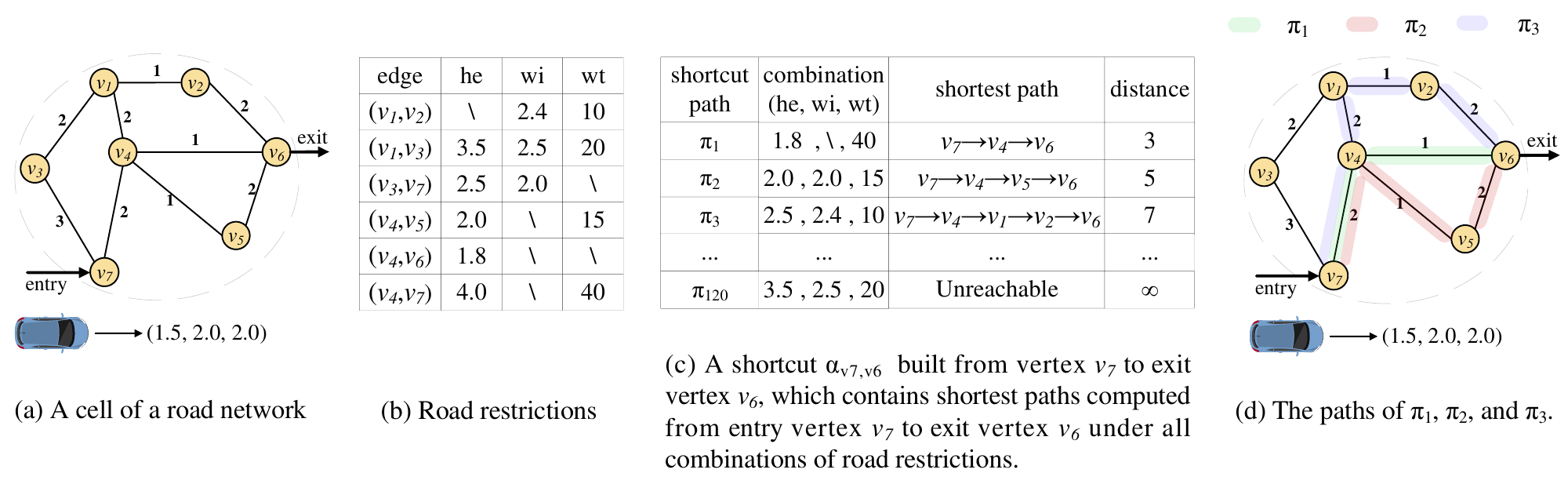} 
  \caption{
  An illustrative example to show the combinatorial explosion problem in shortcut construction for a road network cell with height (he), width (wi), and weight (wt) restrictions. For this road network cell, $v_7$ is the entry vertex and $v_6$ is the exit vertex. A vehicle with a height of 1.5, width of 2.0 and weight of 2.0 wants to travel from the entry vertex to the exit vertex.
  }
  \label{fig:road_restriction}
  \vspace{-5pt}
\end{figure*}

\eat{
\hl{For existing methods to be valid, it is necessary to build different shortcuts for different restrictions.} A naive \chz{straightforward} approach is to combine different height, width, and weight restrictions in the road network into distinct combinations and build a shortcut for each combination.
}


To enable index-based methods to plan feasible paths for vehicles, shortcuts should be built under different road restrictions. 
A naive approach is to combine different height, width, and weight restrictions in the road network into distinct combinations, compute the shortest path for each combination, and store these paths into shortcuts.
\chz{However, the vast majority of roads in the network feature various types of restrictions with differing restriction values. For example, approximately 73.8\% of road segments in China have diverse types of restrictions with varying restriction values. Additionally, China’s 1.0793 million bridges feature a wide range of weight restrictions, highlighting the diversity of restrictions across the network~\cite{ChinaRoadClassification, ChinaCountryRoadRestriction, AnhuiWeightRestriction, ChinaWeightRestrictionSign}.}
Consequently, considering all restriction combinations when building shortcuts would lead to a combinatorial explosion problem, which is intolerable in terms of both storage and computation. A 
naive way to 
reduce the number of shortcuts is to randomly build shortcuts for only a few 
of the restriction combinations. However, storing fewer shortest paths may result in vehicles obtaining sub-optimal 
paths. We illustrate the above with a detailed example.
\eat{
However, considering road restriction combinations when building shortcuts would lead to combinatorial explosion problem, which causes an excessive number of shortcuts.
A 
naive way to 
reduce the number of shortcuts is to randomly build shortcuts for only a few 
of the restriction combinations. However, building fewer shortcuts may result in vehicles obtaining sub-optimal 
paths. We illustrate this with a detailed example.
}

\begin{example}
Figure \ref{fig:road_restriction} shows how shortcuts are built in a road network with different combinations of road restrictions. 
Figure \ref{fig:road_restriction}a provides an example of a high-density cell of the road network, where $v_7$ is the entry vertex and $v_6$ is the exit vertex. Figure \ref{fig:road_restriction}b records the restrictions on different road segments, due to which, we combine different type of road restrictions and computes shortest paths for each combination to ensure that each vehicle can get a feasible shortest path.
Theoretically, there are 120 (\ie 6*4*5) shortcut paths in shortcut from $v_7$ to $v_6$, as shown in Figure \ref{fig:road_restriction}c,
which is 120 times greater than existing methods without considering road restrictions. 
Each shortcut 
corresponds to different shortest paths with different road restriction combinations. 

If we randomly compute shortest paths for only a small subset of all combinations, it is highly likely that the shortcuts containing the optimal or acceptable paths for vehicles will be discarded.
Figure \ref{fig:road_restriction}d shows the shortest paths represented by shortcut paths $\pi_1$, $\pi_2$, and $\pi_3$ under restriction combinations $(1.8, \setminus, 40)$, $(2.0, 2.0, 15)$, and $(2.5, 2.4, 10)$.
Assuming a vehicle with a height of 1.5, width of 2.0, and weight of 2.0 intends to travel from $v_7$ to $v_6$, $S_1$ represents its global shortest path. 
If we only store 
shortcuts $\pi_2$ and $\pi_3$, as shown in Figure \ref{fig:road_restriction}d, 
the vehicle has to travel from $v_7$ to $v_6$ via the path represented by $\pi_2$, \ie $v_7\to v_4\to v_2\to v_6$. Because $\pi_2$ is shorter than $\pi_3$. 
\end{example}


\eat{
As shown in Figure \ref{fig:road_restriction}, when disregarding road restrictions, we calculate the shortest path from V7 to V6 and save it as a shortcut, denoted as shortcut S1. However, for a vehicle with a height of 3, S1 is invalid since the shortest path represent by S1 contains road (edge) ($V_4,V_6$) which has a height limit of 1.8.

As shown in Figure \ref{fig:road_restriction}, for a vehicle with a height of 3, shortcut from $V_2$ to $V_5$ is invalid since the shortest path represent by this shortcut (index) contains road (edge) ($V_1,V_3$) which has a height limit of 2.5. This leads to planned path unworkable for this vehicle. To ensure existing methods are workable, it is necessary to build different shortcuts for all different restrictions. 
Additionally, in reality, we often need to consider multiple road restrictions simultaneously when planning a path for users. A straightforward approach is that we combine each height restriction, width restriction and weight restriction of the road network together, and for each combination we build a shortcut. 
However, considering road restrictions when building shortcuts would lead to an excessive number of shortcuts. We name this situation shortcuts explosion. For example, as shown in Figure \ref{fig:road_restriction}b, there are 6 height restrictions, 4 width restrictions, and 5 Weight restrictions of the partition shown in Figure \ref{fig:road_restriction}a, if we combine different types of the road restrictions, we will get 120 (\ie 6*4*5) combinations.
}

\eat{
Path querying is an important problem in many fields \gsf{important problem in many fields? Do you want to say "is widely used in many fields"?} and has many use cases in various fields, \eat{just say the path query in road networks}
which aims to find a path with an objective, such as the shortest distance or time, 
between a given source vertex and a target vertex in a road network, where the vertices represent intersections and the edges represent road segments. 
In real road networks, some roads may have restrictions, such as height restrictions, width restrictions and weight restrictions, 
which leads to the state-of-the-art efficient path planning methods like CH\cite{} and CRP\cite{} planning unworkable. 
Because these methods accelerate the path planning processing by saving pre-computed \gsf{shortcuts} that represent the shortest path between two vertices on the road network. For example, as shown in Figure \ref{fig:shortcut_invalid}a, we first save the pre-computed shortest path from $V_2$ to $V_3$ as a shortcut. Then for users who want to travel from $V_2$ to $V_3$, they only need to travel along path represented by the shortcut and do not need to visit other roads, which speeds up the path planning process, as shown in Figure \ref{fig:shortcut_invalid}b. Neglecting road restrictions may lead to the pre-computed shortcuts being invalid for queried vehicles.
For example, as shown in Figure \ref{fig:shortcut_invalid}a, for a vehicle with a height of 3, shortcut from $V_2$ to $V_5$ is invalid since the shortest path represent by this shortcut contains road(edge) ($V_1,V_3$) which has a height limit of 2.5. 
To ensure that query vehicles can pass the shortcuts, it is necessary to build different shortcuts for all different restrictions. \gsf{the first paragraph is too long}
}

\eat{\gsf{where the vertices represent XX and the edges represent XX}.}
\eat{
Traditional algorithms like A* Algorithm \cite{} and Dijkstra's Algorithm \cite{} could provide precise and optimal solutions (\eg the path with shortest distance or shortest time) in real road networks. But these algorithms will take prohibitively long time when executed on large road networks in practical purposes, which is unacceptable to road planning systems. 
}
\eat{
but the characteristic of these algorithms is that the running time will increase as the search space increases. This means that these traditional algorithms will take prohibitively long time when executed on large graphs in practical purposes, which is unacceptable to road planning systems\eat{you should emphasize online, online path query}. \eat{remove: Thus, creating an efficient path planning system that can run on a large road network with millions of vertices and edges in an acceptable time is very necessary. this is not the focus of your paper} 
\gsf{see comment}}
\eat{
For example, the shortest path without considering road restrictions from Wall Street, New York to Central Park, New York contains some roads that have height restrictions lower than 2 meters. For a vehicle with a height of 3 meters, this shortest path is invalid. Thus, in order to find an valid shortest path for this vehicle, we need to ensure that all height restrictions of the roads included in this shortest path are greater than or equal to 3 meters. 
Therefore, to ensure that the shortest paths planned for vehicles are valid, these road restrictions need to be taken into account when doing path planning.
}

\eat{
Dijkstra’s algorithm can solve path planning with road restrictions by only visiting the roads (edges) that are valid for queried vehicles, but it will take prohibitively long time when executed on large road networks. 
}
\eat{
Some advanced path planning methods, such as CH\cite{} and CRP\cite{}, can complete shortest path planning on real road network in much shorter time than traditional methods, such as Dijkstra's algorithm\cite{} and A* algorithm\cite{}.
\eat{Some researchers \cite{} have studied how to accelerate path planning on large graphs, and have proposed some methods to make effective path planning systems, such as CH, CRP, and so on. }
These advanced methods pre-computes the shortest paths between some vertices of the road network and save them as shortcuts, and then leverage the shortcuts to speed up a road planning query with a given source vertex and destination vertex.
However, these advanced methods mentioned above do not take road restrictions into account when building shortcuts, the shortest paths represented by these shortcuts may have some road restrictions in some road segments, and these road restrictions may cause the shortcuts invalid for the vehicles.
}
\eat{Since the methods mentioned above do not take road restrictions into account when building shortcuts, the shortest paths represented by these shortcuts may have some road restrictions in some road segments, and these road restrictions may cause the shortcuts unavailable (impassable) for the vehicles.
For example, assuming that the height limit of a road segment of a shortcut is 2 meters, 
 the vehicle with a height of more than 2 meters will not be able to pass this path. This means the shortcut is unavailable for vehicles with a height of more than 2 meters.
}

\eat{
They leverage a two-stage method, the \gsf{offline} preprocessing 
and the \gsf{online} query. 
The preprocessing stage first divides the road network into several partitions and then pre-computes the shortest paths between some vertices in each partition and save them as shortcuts which can then be used in query stage, and the query stage uses the auxiliary data generated in the preprocessing stage to complete a road planning query with a given source vertex and destination vertex in a shorter time. 
}
\eat{
one of the state-of-the-art road planning methods, as an example. As shown in figure 1b \gsf{use ref command}, it first partitions the whole road network shown in figure 1a into several partitions by using the graph partitioning algorithm PUNCH, then it computes the shortest path for every two entrance vertices belonging to the same partition and saves the shortest paths as shortcuts, as shown in figure 1c. Finally, as shown in figure 1d, with the help of these shortcuts, the search space of the processed road network is drastically reduced compared to the original road network shown in figure 1a, so the efficiency of running the bidirectional Dijkstra's algorithm to finish path planning on the new processed graph will be well improved. 
}



\eat{
\zyf{I think the above paragraph is not needed in intro, can be moved to sec2. your point is road restriction, this paragraph is not easy to understand. making readers confused before touching the key point. I don't think you have to talk about this background.}
}

\eat{
\gsf{They assume that all vehicles can pass through each road (edge),}These methods mentioned above assume that all vehicles can pass through each road (edge) of the road network, however, in the real world, some roads may have restrictions \gsf{limitations or restrictions?}. For example, some roads have highway bridges above them \zyf{no need this reason}, so there are height restrictions on these roads. There are also width restrictions \gsf{why have hight limits?} and weight restrictions on some roads, and vehicles also have certain sizes and weights. So these road restrictions may cause the paths planned by these methods unavailable (impassable) for the vehicles. For example, assuming that the height limit of a road is 2 meters, if we ignore the height limit and the size of the vehicle, we may plan the road for a vehicle with a height of more than 2 meters.
This means that the vehicle will not be able to pass through the best roads planned by us. 
Therefore, these restrictions need to be taken into consideration when doing path planning. 
This motivates us to design a new road planning method with consideration of the restrictions of roads and \gsf{remove known} the height, width \gsf{it is better height, width,} and weight of the vehicle, which can plan an available shortest road for the vehicles.
}

\eat{Nevertheless, most of the previous methods for road planning have not given due consideration to road restrictions, as well as the sizes and weights of vehicles \gsf{Do not translate directly from Chinese}. The majority of these methods assume the absence of any restrictions of the road network, allowing all vehicles to traverse all roads. This oversight can result in the planned route for a vehicle becoming unavailable. 
\gsf{see comment} 
For example, assuming that the height limit of a road is 2 meters, if we ignore the height limit and the size of the vehicle, we may plan the road for a vehicle with a height of more than 2 meters.
This means that the vehicle will not be able to pass through the best roads planned by us.
}

\eat{

For example, as shown in figure 2a, here we build a shortcut from v6 to v7, and the actual path of the shortcut is e(v7,v4) and e(v4,v6). It is worth noting that there is a height restriction of 1.8 meters on road v4 to v6 (edge e(v4,v6) in figure 2a), which means that only vehicles with a height of less than 1.8 meters can be able to pass through this road and the shortcut. 
This brings some situations to the road planning system. Assuming that the optimal route planned by us for a vehicle with a height greater than 1.8 meters contains the shortcut (v7,v6), due to height restrictions of road v4 to v6(edge e(v4,v6)), it is clear that the vehicle can not be able to pass through the planned route, \ie our planned route is unavailable for this vehicle. 

Moreover, this situation is unacceptable for a path planning system. This motivates us to design a new road planning method with consideration of the restrictions of roads and known \gsf{remove known} the size \gsf{it is better height, width,} and weight of the vehicle, which can plan an available shortest road for the vehicles. 

}


\eat{\gsf{what do you want to express?}
In this paper, we assume that each vehicle has a certain height, width and weight which need to be input into our system when planning road for a vehicle and each road in the road network has limits of height, width and weight. Based on the above analysis, this leads to the fact that if we want to speed up the process of planning road for a vehicle with given sizes and weight of our system, we need to create different shortcuts for different road restrictions of the road network.
}






\eat{For vehicles with different height, width, and weight, the roads they can choose(pass) are different. Therefore, to make sure that the planned paths are available for different vehicles, we need to build shortcuts
for different road restrictions.
}
\eat{
Based on the illustration above, we can clearly know that traditional algorithm such as Dijkstra's Algorithm is difficult to meet efficiency needs, so many advanced methods build some shortcuts in the road network before doing a road planning query to speed up the route planning system. 
}
\eat{\gsf{see comment}}
\eat{
Moreover, due to the road restrictions, more shortcuts must be built to ensure the availability of the planned roads \gsf{why more shortcuts?}.
}




\stitle{Motivation}.\eat{\gsf{Too wordy, summarize the above and identify the problems you should solve.}}
Based on the above analysis, we find that the combinatorial explosion problem 
necessitates extensive pre-computation of shortcuts and results in significant memory consumption in existing methods.
However, having fewer shortcuts between entry and exit vertices increases the likelihood of missing the optimal or an acceptable path, leading to potentially worse solutions.
\eat{
However, the fewer shortcuts between entry and exit vertices, the worse solution 
we may 
find since the optimal or acceptable path may be lost with a higher probability. 
}
This motivates us to design a method to build fewer shortcuts in the road network and still allow most vehicles to find optimal 
or acceptable paths. 

\eat{
Based on the analysis above, we found that methods like CH and CRP can ensure the speed of path planning, but it requires establishing different shortcuts for various road restrictions.
In reality, due to vehicles having various attributes such as height, width, and weight, we often need to consider multiple road restrictions simultaneously when planning a path for users.
A good method is that we combine each height restriction, width restriction and weight restriction of the road network together, \gsf{how to combine?}, and for each combination we build a shortcut, the road segments of which meet that the height, width and weight restrictions are bigger than or equal to the corresponding restrictions of the combination. 
We can see it clearly \gsf{not clear} that in this way, each vehicle in the query stage can find the most matching shortcut. Thus, each vehicle can get the shortest and available path through the shortcuts. 
But there are many restrictions in the road network, and therefore there are a great number of combinations of restrictions, which means we need to build a large amount of shortcuts. We name this situation shortcuts explosion problem, which will lead to too much memory usage. For example, as shown in Figure \ref{fig:road_restriction}b, there are 6 height restrictions, 4 width restrictions, and 5 Weight restrictions of the partition shown in Figure \ref{fig:road_restriction}a, if we combine different types of the road restrictions, we will get 120 (\ie 6*4*5) combinations. This means that we need to build 120 times more shortcuts than the original method which does not take road restrictions into consideration. 
A very intuitive way to solve shortcuts explosion problem is to build shortcuts only for part of the combinations instead of building shortcuts for every combination. But we find that the fewer shortcuts we build, the worse path (path longer than the shortest path) the vehicles will get. For example, as shown in Figure \ref{fig:road_restriction}, assuming that a vehicle with a height of 1.5 meters, a width of 2 meters and a weight of 2 tons wants to travel from V7 to V6. Obviously, S1 is its shortest path. However, if we do not build so many shortcuts as Figure \ref{fig:road_restriction}c does and only build S2 and S3, then the optimal path of the vehicle is S2 and the distance is longer than S1. Obviously, the vehicle has lost its optimal solution (shortest path). Therefore, it is very difficult to build fewer shortcuts while ensuring that most vehicles can find better solutions. This motivate us to design a method to solve this.
}

\eat{
But this also brings a serious problem \gsf{say problem, not keep secret}, since there are many road restrictions in the road network, there are many combinations of road restrictions, which means we need to create many shortcuts. We name this situation shortcuts explosion problem, which will lead to too much memory usage. For example, as shown in figure 2, there are 6 height restrictions, 4 width restrictions, and 5 Weight restrictions, if we combine different types of the road restrictions, we will get 120 (the number of Height limit * the number of width limit * the number of Weight limit, \ie 6*4*5) combinations. This means that we need to build 120 times more shortcuts than the original method which does not take road restrictions into consideration. Obviously the number of shortcuts increases exponentially and will be too large for us to accept. Therefore, it is very necessary to solve the shortcut explosion problem.
}


\eat{
\stitle{Challenge}. 
Based on previous analysis, we find that building more shortcuts consumes more memory but enables vehicles to find better solutions, while building fewer shortcuts reduces memory usage but causes vehicles to get worse solutions.
\eat{\gsf{it's hard to have the optimal solution with few shortcuts, you also not solve this. only try to obtain better path}}
This brings us a challenge that how can we build fewer shortcuts in the road network and still allow most vehicles to find better solution?
}



\eat{

However, in real life, many roads have many kinds of restrictions, such as height, width, and weight limits, and the size of each type of restriction varies, which results in a great number of different types and sizes of restrictions in the road network.
Moreover, the vehicles doing the path planning queries in the path planning system have different sizes. 
Therefore, in order to make sure that the routes planned by the path planning system are available for vehicles of different sizes, we should create multiple shortcut between each pair of entries nodes within each partition, instead of creating only one shortcut as in the traditional path planning system.

\stitle{Challenge}. 
So, here comes the question, how many shortcuts should we build in our system? we give two special examples to illustrate this problem. 
The first example is that we combine each of the different types of road restrictions in the partitions of the road network to create shortcuts, for each combination, we build a shortcut between two entrance nodes within the same partition. We can see it clearly that each vehicle which is input in road planning system with certain sizes and weight in the query stage could get the optimal and available road from the system. But this also brings a serious problem, since there are many roads in the road network map, this method will bring the problem of combinations explosion, that is to say, we create too many shortcuts, which will lead to too much memory usage and in turn lead to memory explosion and other problems. As shown in figure 2b, there are many restrictions in this partition, including 6 height restrictions,4 width restrictions, and 5 Weight restrictions, if we combine all the restrictions to create shortcuts according to the above method, we need to create a total of 120 (the number of Height limit * the number of width limit * the number of Weight limit, \ie 6*4*5) shortcuts in this partition. Obviously the number of shortcuts increases exponentially and will be too large for us to accept.
The second example is that we only make one shortcut between two entrance nodes within the same partition of the road network to ensure that the number of shortcuts is sufficiently small. To make sure that each vehicle in query stage can get a passable road, we need to build a shortcut that combine the maximum height limit, width limit and weight limit in the partition of the road network. Obviously this method will cause some vehicles with height, width and weight which is smaller than maximum restriction to lose their optimal solution, because vehicles with smaller sizes and weight can pass more road in the road network. For example in figure 2b, using this method, we should make the shortcut with the maximum height (4.0 meters), width (MAX), weight limit (40 tons) of the partition between vertex V7 and vertex V6 and the actual route of the shortcut is V7 to v4, v4 to v2, and v2 to v6. Certainly, employing this method would evidently lead to the omission of optimal solutions for certain smaller-sized vehicles. For instance, a vehicle measuring 1.5 meters in height, 1.5 meters in width, and weighing 2 tons could clearly take route V7 to V4 and V4 to V6, which is shorter than the designated shortcut in distance. However, due to the path planning system prioritizing the shortcut, the vehicle's path planning result would lose the actual optimal path.




Based on this, we can clearly see that the more shortcuts we make, the better result we would get. 
However, more shortcuts will take up more memory, this means that we need to keep the number of shortcuts within a reasonable range. 


This brings us a challenge that how many shortcuts should we create in the road network to ensure that the number of shortcuts does not overload the memory and that most of the vehicles in query stage do not lose the optimal solution?


}


\stitle{Our solution}. 
To address the combinatorial explosion problem in shortcut construction caused by the restrictions of road segments, 
we propose a novel path planning method for road networks with restrictions, \methodname. 
\methodname 
utilizes vehicles' traffic flow data in the road network to filter out rarely used restriction combinations while preserving frequently used ones, which reduces 
the number of shortcuts 
to a reasonable range meanwhile ensuring the quality of path planning.
Additionally, we introduce two optimizations: \storageOptimazation \ to reduce memory usage and \matchingOptimazation \ to expedite shortcuts matching during path planning.
\eat{
To address the issue of excessive indexing mentioned above, we propose a novel method, \methodname.
The core idea of \methodname is to use vehicles' information of traffic flow 
of the road network to filter out rarely used restriction combinations and save the frequently used restriction combinations. And then, we build shortcuts only for the remaining combinations, this way we no longer need to build shortcuts for every possible restriction combinations of the road network. So that we control the number of shortcuts within a reasonable range and ensure the quality of path planning.
}

In summary, our contributions are as follows:
\begin{itemize}[leftmargin=*, topsep=0pt]
    \item \textbf{Studying and analyzing path planning in road networks with restrictions.}
    We formally study the problem of path planning in road networks with restrictions, providing a detailed definition of the problem. We critically analyze the limitations of existing methods when applied to restricted road networks and propose a novel approach to address these challenges.

    \item \textbf{Efficient path planning method.}
     We propose an efficient path planning method tailored for road networks with restrictions. This method significantly reduces the computational and memory overhead associated with building shortcut paths while enabling most vehicles to find optimal or acceptable paths in a short time.     

    \item \textbf{Optimization of shortcut matching speed and shortcut storage.}
     We propose an optimized method for shortcut storage that significantly reduces memory usage by minimizing redundant path storage. Additionally, we have designed a method for quickly shortcut matching, which substantially decreases the matching time and thus accelerates path query processing.

    \item \textbf{Extensive experiments are conducted on real-world road networks.} 
    We conducted extensive experiments on six real-world road networks. The experimental results demonstrate that our proposed path planning method \methodname achieves excellent path planning performance while ensuring high efficiency.

\end{itemize}

\stitle{Organization}. The rest of this paper is organized as follows. We provide preliminaries in section \ref{sec:prelim}. In section \ref{sec:statement}, we analyze the path planning problem in the road networks with restrictions. We provide a detailed introduction to our proposed method in section \ref{sec:method}. Experimental results are reported in section \ref{sec:experiment} and the existing path planning methods are summarized in section \ref{sec:related_work}. Finally, we conclude the paper in section \ref{sec:conclusion}.
\section{Preliminaries}
\label{sec:prelim}

In this section, we first formally introduce some basic concepts related to our work. 

\begin{table}[ht]
\centering
\label{tab:notations}
\caption{Frequently used notations}
\vspace{-0.15in}
\renewcommand{\arraystretch}{1.4}
\begin{tabular}{p{3.0cm}|p{5.0cm}}
\toprule
\textbf{Notation} & \textbf{Description} \\
\midrule
$G=(V, E, R_E, L_E)$ & Road network with restrictions \\
\hline
$he,wi,wt$ & Height, width and weight \\
\hline
$R_E=\{R^{he}_E, R^{wi}_E, R^{wt}_E\}$ & Function to get restrictions \\
\hline
$rc=(he,wi,wt)$ & Road restriction combination \\
\hline
$c=(he,wi,wt)$ & Vehicle \\
\hline
$rv=<he,wi,wt>$ & Representation vector \\
\hline
$\alpha_{s,d},S$ & Shortcut from $s$ to $d$ and a shortcut set \\
\hline

$\pi_{s,d}^{rc}$ & Shortcut path from $s$ to $d$ under $rc$  \\
\hline
$dist()$ & Function to get distance\\
\hline
$C=(V,E,R_E, L_E)$ & Cell of a road network \\
\bottomrule
\end{tabular}
\end{table}

\stitle{Road Network with Restrictions}.
Let $G=(V, E, R_E, L_E)$ be a road network with restrictions where $V$ is a finite set of vertices that represents the intersections, $E = \{(u, v)\} \subseteq V \times V$ is a set of edges which represents road segments. $R_E=\{R^{he}_E, R^{wi}_E, R^{wt}_E\}$ 
contains three restriction functions such that each edge $(u, v) \in E$ carries the \textit{height restriction} $R^{he}_{u,v}$, \textit{width restriction} $R_{u,v}^{wi}$, and \textit{weight restriction} $R_{u,v}^{wt}$. $L_E$ is a length function that each edge $(u,v)\in E$ carries a length value $l_{u,v}$. 

\stitle{Vehicle}. A vehicle is defined as a triplet $c = (he,wi,wt)$ with three attributes, where $he$, $wi$, and $wt$ represent the vehicle's height, width, and weight, respectively.
\eat{
\yaof{The vehicle's attributes corresponding to road restrictions are defined as a triplet $c = (he,wi,wt)$, where $he$, $wi$, and $wt$ represent the vehicle's height, width, and weight, respectively.}
}

\stitle{Feasible Vehicle Path and Its Distance}.
Given a path from $v_0$ to $v_k$ that is denoted as a sequence of vertices $p_{v_0,v_k}=(v_0,v_1,\dots,v_k)$ 
and a vehicle $c=(he,wi,wt)$, the path $p_{v_0,v_k}$ 
is a \textit{feasible vehicle path} for the vehicle $c$ if and only if 
$$
    c.he \leq R^{he}_{v_i,v_{i+1}},
$$
$$
    c.wi \leq R^{wi}_{v_i,v_{i+1}},
$$
$$
    c.wt \leq R^{wt}_{v_i,v_{i+1}},
$$
for each edge $(v_i,v_{i+1}) \in p_{v_0,v_k}$. The distance of $p_{v_0,v_k}$ is the sum of the length of all road segments on $p_{v_0,v_k}$, \ie $dist(p) = \sum_{i=0}^{k-1} l_{i,i+1}$.


\stitle{Shortest Feasible Vehicle Path.}
Given a set of feasible vehicle paths from vertex $v_0$ to $v_k$, $P_{v_0,v_k}=\{p^1_{v_0, v_k}, \cdots, p^N_{v_0, v_k}\}$ for the vehicle $c=(he,wi,wt)$, 
the \textit{shortest feasible vehicle path} $p^*_{v_0,v_k}$ from $v_0$ to $v_k$ 
for $c$ is defined as 
$$
dist(p^*_{v_0,v_k}) \leq dist(p^i_{v_0, v_k}), 1\leq i \leq N.
$$


\etitle{Path Planning.} 
Given a road network with restrictions $G = (V, E, R_E, L_E)$ and a vehicle $c=(he,wi,wt)$, path planning aims to find a \textit{shortest feasible vehicle path} $p^*_{s,d}$ from a given source vertice $s$ to a destination vertice $d$ for the vehicle $c$. 

As we discussed previously, many methods~\cite{CRP,CH,HH,AH} accelerate path planning by building shortcuts in dense areas of road networks, \ie precomputing the shortest path from an entry to an exit of the dense area, thereby reducing the online search space. 
However, traditional shortcuts do not consider road restrictions when they are built, and each shortcut represents 
only one single shortest path. In a road network with restrictions, the shortcut 
should store multiple shortest paths for different road restrictions combinations, so that we can find the shortest \textit{feasible vehicle path} for different vehicles with different heights, widths, and weights. Based on this, we formally define the 
shortcuts on road networks with restrictions. Before that, we first provide some basic concepts, \ie entry/exit vertices, feasible shortcut paths, and the shortest feasible shortcut path.

\stitle{Entry/Exit vertices}. Let $G_i=(V_i, E_i, R_{E_i}, L_{E_i})$ is a subgraph of a road network with restriction $G$, where $E_i\subseteq V_i \times V_i \cap E$. The vertex $v\in V_i$ is an entry (exit) vertex if it has an incoming edge $(u,v)\in E\setminus E_i$ \big(outgoing edge $(v,w)\in E\setminus E_i$\big)  and $u\not\in E_i$ ($w\not\in E_i$).  

\stitle{Feasible Shortcut Path}.
Given a shortcut path $\pi_{v_0,v_k}=(v_0, \cdots, v_k)$ from the entry vertex $v_0$ to exit vertex $v_k$ of a subgraph and a restriction combination $rc=(he,wi,wt)$, the path $\pi_{v_0,v_k}$ is a \textit{feasible shortcut path} under restriction $rc$ if and only if
$$
    rc.he \leq R^{he}_{v_i,v_{i+1}},
$$
$$
    rc.wi \leq R^{wi}_{v_i,v_{i+1}},
$$
$$
    rc.wt \leq R^{wt}_{v_i,v_{i+1}},
$$
for each edge $(v_i,v_{i+1}) \in \pi_{v_0,v_k}$.

\stitle{Shortest Feasible Shortcut Path}. Given a set of feasible shortcut paths from $s$ to $d$, $\Pi_{s,d} = \{\pi^1_{s, d}, \cdots, \pi^N_{s, d}\}$ for $rc=(he,wi,wt)$, the \textit{shortest feasible shortcut path} is defined as
$$
dist(\pi^{rc}_{v_0,v_k}) \leq dist(\pi^i_{v_0, v_k}), 1\leq i \leq N.
$$

\stitle{Domination}. 
Given a vehicle $c=(he,wi,wt)$ and a restriction combination $rc=(he,wi,wt)$, 
$c$ is dominated by $rc$ if and only if
$$
c.he \leq rc.he,
$$
$$
c.wi \leq rc.wi,
$$
$$
c.wt \leq rc.wt.
$$
\begin{lemma}
\label{lem:dominate}
    Given a vehicle $c=(he,wi,wt)$, a restriction combination $rc=(he,wi,wt)$, and a feasible shortcut path $\pi$ for $rc$,
    if 
    $c$ is dominated by 
    $rc$, then $\pi$ a feasible path for $c$.
\end{lemma}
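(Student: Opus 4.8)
The plan is to prove the lemma by a direct, edge-by-edge transitivity argument, treating the three restriction dimensions (height, width, and weight) independently. Since both feasibility notions are defined as conjunctions of per-edge inequalities, it suffices to show that every edge of $\pi$ satisfies the three vehicle-feasibility constraints for $c$. I would therefore reduce the statement to a claim about a single arbitrary edge and then quantify over all edges at the end.

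First I would fix an arbitrary edge $(v_i, v_{i+1})$ on $\pi$. Because $\pi$ is a feasible shortcut path for $rc$, the definition of feasible shortcut path yields $rc.he \leq R^{he}_{v_i,v_{i+1}}$, $rc.wi \leq R^{wi}_{v_i,v_{i+1}}$, and $rc.wt \leq R^{wt}_{v_i,v_{i+1}}$. Next, because $c$ is dominated by $rc$, the definition of domination yields $c.he \leq rc.he$, $c.wi \leq rc.wi$, and $c.wt \leq rc.wt$.

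The key step is to chain these two families of inequalities using the transitivity of $\leq$ in each dimension: $c.he \leq rc.he \leq R^{he}_{v_i,v_{i+1}}$, and likewise $c.wi \leq R^{wi}_{v_i,v_{i+1}}$ and $c.wt \leq R^{wt}_{v_i,v_{i+1}}$. These three inequalities are exactly the conditions in the definition of a feasible vehicle path for $c$ restricted to the edge $(v_i, v_{i+1})$. Since this edge was arbitrary, the three constraints hold on every edge of $\pi$, and hence $\pi$ is a feasible path for $c$.

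I do not expect a genuine obstacle, as the lemma is essentially transitivity of $\leq$ lifted across three dimensions and over the edges of $\pi$. The only point deserving care is that feasibility is a per-edge, per-dimension conjunction, so the argument must be stated uniformly over all edges and all three restriction types rather than verified for a single representative dimension or edge.
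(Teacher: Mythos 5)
Your proposal is correct and follows essentially the same argument as the paper's proof: invoke the per-edge inequalities from the feasible-shortcut-path definition, invoke the domination inequalities, and chain them by transitivity of $\leq$ in each of the three dimensions to conclude per-edge feasibility for $c$. The only cosmetic difference is that you fix an arbitrary edge and then quantify at the end, whereas the paper states the inequalities uniformly over all edges throughout; the substance is identical.
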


\begin{proof}
    Given a feasible shortcut path $\pi_{v_0,v_k}=(v_0, \cdots, v_k)$ for $rc=(he,wi,wt)$ and a vehicle $c=(he,wi,wt)$ dominated by $rc$. We have $rc.he \leq R^{he}_{v_i,v_{i+1}}$, $rc.wi \leq R^{wi}_{v_i,v_{i+1}}$ and $rc.wt \leq R^{wt}_{v_i,v_{i+1}}$ for each edge $(v_i,v_{i+1}) \in \pi_{v_0,v_k}$. Since $c.he \leq rc.he$, $c.wi \leq rc.wi$ and $c.wt \leq rc.wt$. Then we have $c.he \leq R^{he}_{v_i,v_{i+1}}$, $c.wi \leq R^{wi}_{v_i,v_{i+1}}$ and $c.wt \leq R^{wt}_{v_i,v_{i+1}}$ for each edge $(v_i,v_{i+1}) \in \pi_{v_0,v_k}$, \ie $\pi_{v_0,v_k}$ is a feasible path for $c$.   
\end{proof}

\stitle{Shortcut on Restriction Road Network}. 
Given a dense area of the road network with restrictions, where the dense area is a subgraph of the road network and is called a dense cell, the \textit{shortcut} $\alpha_{s,d} = \{\pi_{s,d}^{rc_0}, \pi_{s,d}^{rc_1}, \dots, \pi_{s,d}^{rc_k}\}$ stores a set of shortest paths from $s$ which is the entry vertex of $C$ to $d$ which is the entry vertex of $C$, each computed under different road restriction combinations $rc$. Here, $rc$ is represented as $rc=(he,wi,wt)$, where $he$, $wi$, and $wt$ denote height restriction, width restriction, and weight restriction, respectively.

\eat{
\begin{definition}[\textbf{Shortcut path}]
Given a shortcut $\alpha_{s,d}$, a shortcut path $\pi_{s,d}^{rc}\in \alpha_{s,d}$ denotes the shortest path from $s$ to $d$ computed under the restriction combination $rc$. 
\end{definition}

\begin{definition}[\textbf{Feasible shortcut path}]
Given a shortcut $\alpha_{s,d}$, a shortcut path $\pi_{s,d}^{rc}\in \alpha$ is a feasible shortcut path for a vehicle $c = (he, we, wt)$ if the following conditions are satisfied:
$$c.he \leq rc.he$$
$$c.wi \leq rc.wi$$
$$c.wt \leq rc.wt$$
\end{definition}

\begin{definition}[\textbf{Shortest shortcut path}]
Given a shortcut $\alpha_{s,d}$, a shortcut path $\pi_{s,d}^{rc *} \in \alpha$ is the shortest shortcut path for a vehicle $c$ if, for all $\pi \in \Pi^{f}_{c}$, the following condition is satisfied:
$$
dist(\pi_{s,d}^{rc *}) \leq dist(\pi)
$$
where $\Pi^{f}_{c}$ is the set of all feasible shortcut paths of $\alpha_{s,d}$ for $c$.
\end{definition}
}

\eat{
\stitle{Path Planning.}
A path from $s$ to $d$ is denoted as a sequence of vertices $p=(v_0,v_1,v_2,\dots,v_k)$, where $v_0=s$ and $v_k=d$. Each path has a distance, which is calculated as $dist(p) = \sum_{i=0}^{k-1} w_{v_i,v_{i+1}}$.
We define a path query as $q = ( v_{src}, v_{des}, {c} )$, where $v_{src}$ is a given source vertex, $v_{des}$ is a given destination vertex, and $c$ represents the vehicle of the query. Given $A^i$\text{ }:\text{ }$c \rightarrow \mathbb{R}^{+}$, the function assigns each vehicle a positive number as its attribute and i represents different type of attributes. We denote $A^1(c)$, $A^2(c)$, and $A^3(c)$ as the height, width, and weight of the vehicle $c$ respectively. Path planning aims to answer path query $q$ by finding the feasible shortest path $p$ between $v_{src}$ and $v_{des}$, where $\forall e \in p$, $A^i(c) \leq R^i(e) \text{ } (i = 1, 2, 3)$.
}


\eat{
\begin{definition}[\textbf{Path query}] 
A path query is denoted as $q = ( v_{src}, v_{des}, {c} )$, where $v_{src}$ is a given source vertex, $v_{des}$ is a given destination vertex, and $c$ represents the vehicle of the query. \zyj{comment}
Given $A^i$\text{ }:\text{ }$c \rightarrow \mathbb{R}^{+}$, the function assigns each vehicle a positive number as its attribute and i represent\zyj{s} different type of attributes. We denote $A^1(c)$, $A^2(c)$, and $A^3(c)$ as the height, width, and weight of the vehicle $c$ respectively.
$q$ asks for the feasible shortest path $p$, where $\forall e \in p$, $A^i(c) \leq R^i(e) \text{ } (i = 1, 2, 3)$.
\end{definition}
}


\eat{
The restrictions of shortcut $S$ are denoted as $R^1(S)$, $R^2(S)$, and $R^3(S)$, which stand for the height restriction, width restriction and weight restriction of the shortcut $S$ respectively. Each restriction of $S$ satisfies:
$$R^i(S) \leq \min \{ R^i(e) \mid e \in S\} \quad (i = 1, 2, 3) $$
Thus, the vehicle c which meets $A^i(c) \leq R^i(S)$ can pass shortcut $S$, and all the road segments of a shortcut $S$ meets $\forall r_i \in S$, $R^j(S) \leq R^j(e_i) \text{ }(j = 1, 2, 3)$.
}

\begin{example}
As illustrated in Figure \ref{fig:road_restriction}c, a shortcut $\alpha_{v_7,v_6} = \{\pi_0, \pi_1, \dots,\\ \pi_{120}\}$ from vertex $v_7$ to vertex $v_6$ is built.

As illustrated in Figure \ref{fig:road_restriction}c, \{$\pi_1$, $\pi_2$, \dots $\pi_{120}$\} are the shortcut paths computed from $v_7$ to $v_6$, where $v_7$ is the source vertex, and $v_6$ is the destination vertex. Taking $\pi_3$ as an example, $\pi_3$ is the shortest path computed under the restriction combination $(2.5, 2.4, 10)$, meaning that the height restriction of $\pi_3$ (\ie $R^{hi}(\pi_3)$) is 2.5, the width restriction (\ie $R^{wi}(\pi_3)$) is 2.4, and the weight restriction (\ie $R^{wt}(\pi_3)$) is 10.0. The shortcut path $\pi_3$ consists of the road segments $(v_7, v_4)$, $(v_4, v_1)$, $(v_1, v_2)$, and $(v_2, v_6)$, with a total distance of 4.
\end{example}

\eat{
\subsection{Graph Partitioning} 
\label{sec:prelim:Graph Partition}

\eat{
}

\eat{
Graph partition is to use some Graph partition algorithms to partition a graph. After this process, the set of nodes of the original graph will be divided into mutually exclusive groups, and the edges in the original graph will be will be divided into two categories: the first is non-cross-partition \ie both the source vertex and the target vertex belong to the same partition and the second is cross-partition \ie the source vertex and the target vertex belong to the same partition.
}
Given a graph $G = (V,E)$, graph partitioning is to divide $G$ into several partitions $P_i = (V_i,E_i)$ $(i = 1,2,\dots,n)$, which meets that $\cup_{i=1}^{n} V_i = V$, $\cup_{i=1}^{n} E_i = E$ and $\cap_{i=1}^{n} V_i = \emptyset$.
Boundary vertices of $G$ can be defined as $V_b = $ $\{v\mid (u,v)\in E, u\in V_i, v\in V\setminus V_i\}$, \ie vertices that have edges connected to vertices in other partitions.
}

\section{Problem Statement}
\label{sec:statement}
In this section, we first introduce the state-of-the-art two-stage framework for path planning. Then we analyze the problem of the combinatorial explosion of this framework on the road networks with restrictions, which leads to a surge in memory and computation. Based on that, we introduce the goal of this paper.


\subsection{Two-stage path planning framework} 
The existing state-of-the-art path planning algorithms~\cite{CRP,HH,CH} employ a two-stage framework to search the shortest path for the given source and destination vertices. 
This framework contains an offline preprocessing stage and an online query stage. 

In the offline preprocessing phase, it first identifies the dense areas of the road network, called dense cells of the road network, which is done by the community detection algorithms, such as PUNCH~\cite{PUNCH}, METIS~\cite{METIS}, KaPPa\cite{KaPPa} and SCOTCH~\cite{SCOTCH}. Then it builds shortcuts from the entry vertex to the exit vertex of the dense cell. In the online querying phase, it employs an existing smart path planning algorithm, such as Dijkstra~\cite{Dijkstra59} and A* search~\cite{AstarSearch}, to find the shortest path. When entering a dense cell, the path planning algorithm directly visits the shortest path represented by the shortcuts to avoid extensive searches within the cell, thereby significantly enhancing the efficiency of path planning.

As shown in Figure \ref{fig:overview}, in the preprocessing stage, it first partitions the road network $G$ shown in Figure \ref{fig:overview}a in into multiple cells, where different background colors represent different cells.  
Next, it computes the shortest paths between the entry and exit vertices within each cell and stores these paths in shortcuts, as illustrated in Figure \ref{fig:overview}c. In this figure, red dashed lines represent the shortcuts built between the entry and exit vertices, and red solid lines indicate the actual shortest paths recorded by these shortcuts. As shown in Figure \ref{fig:overview}d, with the help of shortcuts, the search space is significantly reduced.

\eat{In the preprocessing stage, it partitions the road network into multiple cells and
builds shortcuts to simplify path planning. In the online query stage, these pre-computed shortcuts are utilized in path searching, thereby significantly reducing the path planning time. This framework optimizes the efficiency of online queries by distributing the computational burden to the offline preprocessing stage. Figure \ref{fig:overview} illustrates the execution flow of this framework.
\gsf{In the offline preprocessing phase, it first identifies the dense areas in the graph, called dense cells of the graph, which be done by the community detection algorithms \cite{} such as \gsf{XX, XX, and XX}. Then it builds the shortcut from the entry vertex to the exit vertex of the dense cell. In the online querying phase, it employs an existing smart path planning algorithm, such as A*, to find the shortest path. When entering the dense cell, it uses the shortest path represented by the shortcut, thereby avoiding entering the dense cell and performing a large amount of search.}
}




\eat{
\stitle{Offline preprocessing Stage.}
In the preprocessing stage, as shown in Figure \ref{fig:overview}b, it first partitions the road network $G=(V,E)$ in Figure a into multiple cells $C_i = (V_i,E_i)$ $(i = 1,2,\dots,n)$, where different background colors represent different cells. Each cell represents an independent dense area and satisfies that $\cap_{i=1}^{n} V_i = \varnothing$ and $\cup_{i=1}^{n} V_i = V$. 
Next, it computes the shortest paths between the entry and exit vertices within each cell and stores these paths in shortcuts, as illustrated in Figure \ref{fig:overview}c. In this figure, red dashed lines represent the shortcuts built between the entry and exit vertices, and red solid lines indicate the actual shortest paths recorded by these shortcuts. Finally, it constructs a new graph by including all entry/exit vertices, the edges connected to entry/exit vertices, and all previously built shortcuts, forming a second-level graph to further enhance the efficiency of path planning during the online query stage.

\stitle{Online query Stage.}
In the query stage, the framework utilizes the hierarchical structure and shortcuts built during the preprocessing stage to accelerate path planning. As illustrated in Figure \ref{fig:overview}d, it constructs a new graph by combining the cells containing the starting vertex $s$ and the destination vertex $d$ with the second-level graph built during the preprocessing stage. The path planning algorithm is then executed on this graph to search the shortest path from $s$ to $d$. It can be observed that performing path planning on this newly constructed graph avoids searching the entire road network, which significantly reduces the search space and thus speeding up the path planning process.
}

\begin{figure}
    \centering
    \includegraphics[width=3.0in]{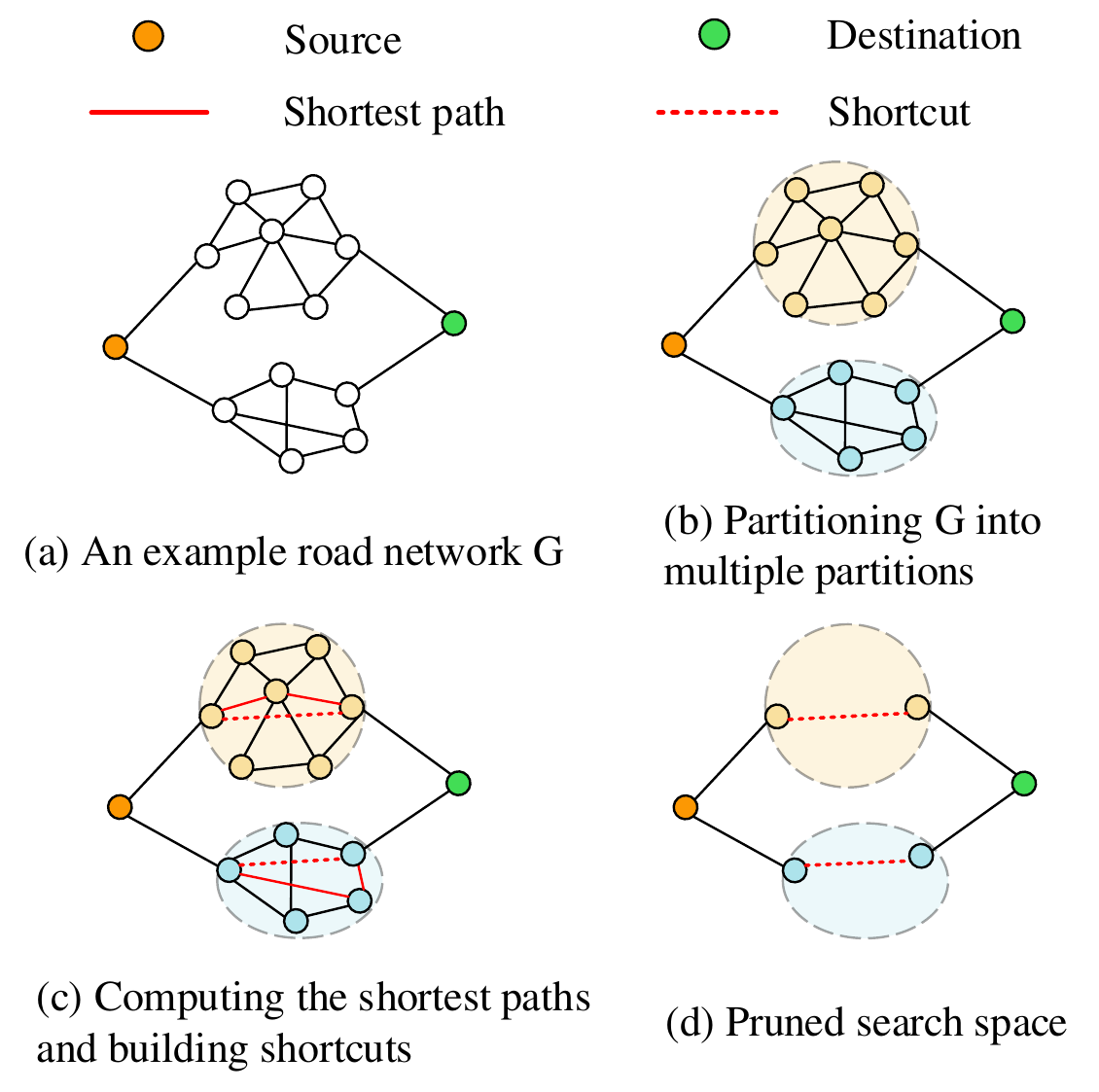}
    \caption{An example to illustrate the process of shortcut building by existing methods. 
    The vertices of the road network represent the intersections, the edges represent roads, and the shortcuts represent the record of pre-computed shortest paths. 
    \eat{
    \zyf{use different color to differentiate shortest path from shortcut. not show the search process, should compare the two candidate paths. why are source and dest not in cells? (d) pruned search space not understand.}
    }
    }
    \label{fig:overview}
    \vspace{-0.30in}
\end{figure}




\stitle{Drawbacks}. The traditional two-phase framework introduced above does not consider the restrictions within the road network, and each shortcut only stores a single shortest path that is treated as feasible for all vehicles. However, in road networks with restrictions, some edges may have restrictions, as introduced in Section \ref{sec:prelim}. This necessitates the pre-computation of multiple \textit{feasible shortcut paths} under different restriction combinations, allowing vehicles with different height, width and weight to find their \textit{feasible shortest paths}. Unfortunately, dense cells contain many edges and, consequently, a variety of road restrictions with differing values. This results in an exponential increase in the number of restriction combinations, \ie combinatorial explosion problem. To illustrate, consider a cell with $N_{he}$ different height restrictions, $N_{wi}$ different width restrictions, and $N_{wt}$ different weight restrictions. The number of all restriction combinations in this cell is $N_{he}*N_{wi}*N_{wt}$. For example, as shown in Figure \ref{fig:road_restriction}, the cell depicted in Figure Figure \ref{fig:road_restriction}a has 120 restriction combinations (\ie 6*4*5). 

Therefore, computing feasible shortcut paths for all road restrictions incurs significant computational and storage overhead. Additionally, this approach greatly increases the time required for online queries. This is because, for a vehicle 
$c$, if the restriction combination $rc$ of the shortest feasible shortcut path in the shortcut dominates $c$, then that path is a feasible path for $c$ (as demonstrated in Lemma \ref{lem:dominate}). The more restriction combinations there are, the more feasible paths are stored in the shortcut for $c$, leading to an increased search effort to find the shortest feasible vehicle path for $c$. In other words, the combinatorial explosion problem not only results in significant computational and storage overhead but also increases the online planning time.


\stitle{Naive Solution}. To avoid the calculation and storage of a large number of shortest feasible shortcut paths, a naive method is to randomly select a part of the restriction combinations and calculate their shortest feasible shortcut paths. Since some shortest feasible shortcut paths for some restriction combinations are discarded, the planned path may not be the shortest. This is because if the shortest feasible shortcut path, for a given vehicle, is discarded on a shortcut, this vehicle will search for another shortest feasible shortcut path for wider restriction so that the vehicle can find a feasible path. However, the shortest feasible shortcut path of wider restriction always is longer. Therefore, the path planned by the shortest feasible shortcut paths that are randomly saved in the shortcut is not the shortest feasible vehicle path, but longer than the actual shortest path. The fewer the shortest feasible shortcut paths saved in the shortcut, the longer the planned path will be compared to the shortest feasible vehicle path. 

Only calculating and saving a small number of shortest feasible shortcut paths will lead to a longer shortest feasible vehicle path. For a given number of shortest feasible shortcut paths, is there any shortest feasible shortcut paths selection method that minimizes the shortest feasible vehicle path for all vehicles? 

\stitle{Our Goal}. We aim to design a shortest feasible shortcut path selection method $\mathcal{S}$, such that for a vehicle set $C=\{c_i\}$, the sum of the length of the shortest feasible vehicle path is minimized, \ie
\begin{equation}
    minimize \sum^{N}_{i=1}dist(\mathcal{P}_{c_i}(\mathcal{S}(\Pi)))
\end{equation}
where $\mathcal{P}_{c_i}$ is the shortest path planning method, \eg $A*$, that return the shortest feasible path.

\eat{
\begin{definition}[\textbf{Feasible solution}]
Given a path query $q = ( v_{src}, v_{des}, {c} )$, a feasible solution is a path $p_f = (v_0,v_1,v_2,\dots,v_k)$ satisfies that $\forall e \in p$, $A^i(c) \leq R^i(e) \text{ } (i = 1, 2, 3)$, $v_0=v_{src}$ and $v_k=v_{des}$. The feasible solution path set is denoted as $P_f$.  
\end{definition}

\begin{definition}[\textbf{Optimal solution}]
Given a path query $q = ( v_{src}, v_{des}, {c} )$, the global optimal solution is the path $p_o = (v_0,v_1,v_2,\dots,v_k)$ is denoted as $p_o = min\{dist(p)\mid\ p\in P_f \}$. 
\end{definition}

Based on the definitions provided above, we define the proportion of optimal solution for a given query set $Q$ as follows:
$$opt(Q) = \frac{\sum_{q \in Q} \mathcal{X}(p,p_o)}{|Q|} $$
where $|Q|$ denotes the number of query and $\mathcal{X}(p,p_o)$ is defined as:
$$\mathcal{X}(p,p_o)=
\begin{cases}
 1 & \text{if } p = p_o \\
 0 & \text{if } p \ne p_o
\end{cases}$$

Clearly, the higher the proportion of optimal solutions $opt(Q)$ of a given query set $Q$, the better the effectiveness of the path planning algorithm. Accordingly, the problem addressed in this paper can be formulated as follows.


\stitle{Problem formulation.} Build a set of shortcuts, denoted as $S^*$, which can store the minimum number of actual shortest paths, while simultaneously achieving optimal effectiveness of path planning, \ie 
\yaof{comment.}
$$S^*=\arg \min
\sum_{s \in S} \mathcal{F}_{cnt}(s)
\text{\ \ }s.t.\  \max \ opt(Q)
$$
where $S$ denotes the shortcut set, $\mathcal{F}_{cnt}(s)$ is the function that calculates the number of actual shortest paths of shortcut $s$ and $Q$ is a given query set.

To maximize $opt(Q)$, a naive approach, referred to as 'All'\zyj{comment}
, is to pre-compute and store the shortest paths for\chz{under} all theoretical combinations of road restrictions when building shortcuts. However, due to the numerous different road restrictions present in the network, this method leads to a combinatorial explosion problem. Consequently, it fails to meet the requirement of storing the minimum number of actual paths and thus cannot effectively address our problem.

Another naive approach, known as "Random," significantly reduces the number of stored shortest paths. This approach involves randomly selecting a subset of all theoretical road restriction combinations and storing the shortest paths for\chz{under} these selected combinations when building shortcuts. However, this method fails to ensure the maximization of the proportion of optimal solution, \ie opt(Q).

\eat{
In road networks with restrictions, as introduced in Section \ref{sec:prelim}, shortcuts should store multiple shortest paths under different restriction combinations. This allows us to determine the appropriate restriction combination based on the vehicle's height, width, and weight during path searching. Consequently, we can find the corresponding shortest path in the shortcut and return the path planning result.

However, in real-world road networks, each type of restriction (\eg, height restriction, width restriction, and weight restriction) may have multiple different values. This leads to a rapid increase in the number of restriction combinations, resulting in a combinatorial explosion.
}

\eat{
As shown in Figure \ref{fig:road_restriction}, the number of road restriction combinations in the cell depicted in Figure \ref{fig:road_restriction}a reaches up to 120. First, calculating and storing the shortest path for each combination requires significant computational time and storage space. Second, during the online query process, accessing shortcuts requires matching the appropriate restrictions to obtain a feasible shortest path. When the number of combinations is large, this matching process can become excessively time-consuming, potentially consuming the majority of the time allocated for path planning.
}

\eat{
\yaof{The next two paragraphs are too long.}
Before introducing our proposed path planning method \methodname, we first present the evaluation methods for path planning effectiveness and formally define the path planning problem for road networks with restrictions. Subsequently, we propose and thoroughly analyze two naive \zyj{comment}
methods to address this problem. Through the analysis of these methods, we summarize the challenges of path planning in road networks with restrictions and propose key observations to address these challenges.

In the preceding sections, we have analyzed the path planning problem within road networks with restrictions. The objective of this problem is to minimize the number of pre-stored shortest paths while ensuring effective path planning. The effectiveness of a path planning algorithm can be evaluated by the proportion of optimal solutions among those planned for each query in the query set. We will now present the relevant concepts and definitions:


\yaof{solution can be defined? change a word}

\begin{theorem}
When retaining the same number of restriction combinations, the effectiveness of path planning improves as more frequently used combinations are preserved.
\end{theorem}

\begin{proof}
    
\end{proof}

}

\eat{
To ensure that each vehicle’s path query can find a feasible path, two naive approaches are commonly considered. The first method, referred to as "All," combines different types of restrictions (such as height, width, and weight) to build a shortcut for each combination, thereby ensuring all vehicles can find the shortest passable path. 
However, due to the various road restrictions in real-world networks, this approach leads to a combinatorial explosion problem, resulting in significant computational and memory overhead.

Alternatively, another naive approach, referred to as "Random," involves building shortcuts for a randomly selected subset of all road restriction combinations. While "Random" reduces computational and memory overhead compared to the "All" method, it compromises path planning effectiveness. Not all potential road restriction scenarios are covered, potentially leading some vehicles to fail in finding an optimal or feasible path, particularly if their specific restriction combination was excluded from the randomly selected subset. Consequently, path distances may increase, and some path queries could even fail.

The above analysis indicates the drawbacks of the two naive methods for road networks with restrictions. Building more shortcuts improves path planning performance but results in higher computational and memory overhead. Conversely, building fewer shortcuts reduces these overheads but leads to poorer path planning performance.
}


\eat{
To ensure that each path query can get a passable path, a naive approach is to combine different types of restrictions, such as height, width, and weight, and build a shortcut for each combination. We refer to this method as \allShortcutMethod. This approach ensures that all vehicles can find the shortest passable path. However, this method results in substantial computational and memory overhead due to the large number of shortcuts that need to be built, because there are many types of road restrictions and their corresponding values in real-world road networks.

Another naive method is to build shortcuts for only a random selection of road restriction combinations, which can reduce computational and memory overhead compared to \allShortcutMethod. This approach involves randomly selecting a subset of possible road restriction combinations and building shortcuts only for these selected combinations.
Nonetheless, this method leads to a decrease in path planning effectiveness. Since not all possible road restriction scenarios are covered, some vehicles may not find an optimal or even feasible path, especially if their specific restriction combination was not included in the randomly selected subset. This reduction in path planning effectiveness can result in longer path distances and even failed path queries.
}

\stitle{Key observation.} Different combinations of road restrictions have varying impacts on the effectiveness of path planning. The shortest paths computed based on certain restriction combinations can only help a few vehicles find optimal solutions, while other combinations can enable a majority of vehicles to find optimal solutions. Therefore, by filtering out restriction combinations that only benefit a few vehicles, and retaining those that benefit a large number of vehicles, and subsequently pre-computing and storing the shortest paths based on these remained combinations, we can significantly reduce the storage of actual paths while maintaining the effectiveness of path planning, thereby addressing our problem.
\eat{
Different combinations of road restrictions have varying impacts on improving path planning effectiveness. Some shortcuts, built based on certain restriction combinations, only help a small number of vehicles find optimal or acceptable paths, while others enable the majority of vehicles to find such paths. Therefore, by filtering out road restriction combinations that only benefit a few vehicles and focusing on building shortcuts for combinations that benefit a large number of vehicles, we can significantly reduce computational and memory overhead while maintaining the effectiveness of path planning. Based on this key finding, we propose a novel path planning method, which will be introduced in the following sections.
}




\eat{
\begin{definition}[Vehicle level]
Given two vehicles, $c_1$ and $c_2$, each vehicle has its height, width, and weight. Vehicle $c_1$ has higher level than $c_2$ if $A^i(c_1) \geq A^i(c_2) \text{ } (i=1, 2, 3)$, and at least one of them is the strictly bigger relation.
\end{definition}



\begin{theorem}
Lower-level vehicles are more likely to find better solutions in the road network compared to higher-level vehicles.
\end{theorem}

\begin{proofS}
%
In a given road network $G$, suppose that there are two vehicles, $V_A$ and $V_B$, intending to travel from the given source vertex S to the given destination vertex D in G. 
And we also suppose that the two vehicles satisfy that vehicle $V_A$ has a higher level than vehicle $V_B$.
\begin{equation} \label{eq:lower_level}
\begin{split}
R^i(e) \geq R^i(V_B) \quad (i = 1, 2, \ldots, n)
\end{split}
\end{equation}
Therefore we have : all the edges (road segments) in $E$ of $G$ which satisfy equation \ref{eq:higher_level}, must satisfy equation \ref{eq:lower_level}. 
This means that any road segment that allows vehicle $V_A$ to pass will definitely allow vehicle $V_B$ to pass as well. We also have : there may exist some edges in $E$ of $G$ which satisfy equation \ref{eq:lower_level} but do not satisfy equation \ref{eq:higher_level}. This means that there may exist some road segments that allow vehicle $V_B$ to pass but do not allow vehicle $V_A$ to pass.


\begin{equation} \label{eq:lower_level}
\begin{split}
R^i(e) \geq R^i(V_A) \quad (i = 1, 2, \ldots, n)
\end{split}
\end{equation}


Based on the above analysis, it is clear that the edge set $E_A$ that consists of the edge e which satisfy equation \ref{eq:higher_level} and the edge set $E_B$ that consists of the edge e which satisfy equation \ref{eq:lower_level} must have the relation :
$$E_B \subseteq E_A \subseteq E $$
Therefore, vehicle B has more available road segment options than vehicle A, making it easier to find a better solution.

\end{proofS}
}
}

\section{\methodnameText \text{ } algorithm}
\label{sec:method}

This section provides a detailed introduction to our proposed path planning method, \methodname. The core idea of \methodname is to filter road restriction combinations using traffic flow data from the road network. We retain the combinations used by the majority of vehicles and discard those used by only a few. Building shortcuts solely for the retained combinations significantly reduces the number of shortcuts, thereby greatly decreasing memory usage and computational overhead.

\eat{In the preceding analysis, we have learned that while building shortcuts for all restriction combinations of the road network ensures the effectiveness of path planning, it leads to the issue of shortcut explosion, thus consuming a significant amount of memory.

This part constitutes the core of our proposed method. The key idea of this method is to utilize traffic information of the road network to filter road restriction combinations of the road network, retaining the combinations that most vehicles will match while discarding the combinations that only a few vehicles will match. This approach not only addresses the shortcut explosion problem mentioned above but also enables the majority of vehicles to get better solutions.
This method consists of three algorithms. We will proceed to provide brief introductions to each of them in turn.

This section describes in detail that how the vehicles grouping method is implemented and how to optimise it. 
}



\stitle{\chz{Representative Vehicle Dimension Mining}}.
\eat{
Through our analysis, we found that the height, width, and weight of most vehicles are concentrated within specific ranges. By clustering vehicles of the traffic flow data of a dense cell based on their attributes, we can identify the height, width, and weight ranges that apply to the majority of vehicles. Using this range information to filter road restriction combinations ensures that the \textit{shortest feasible shortcut paths} of retained combinations enable the vast majority of vehicles to find their \textit{shortest feasible vehicle path}.
}
\chz{The height, width, and weight of vehicles in real-world road networks exhibit distinct clustering trends within localized regions~\cite{ChinaVehicleClassification}. Based on this observation, we aim to identify the primary distribution ranges of these attributes to derive representative vehicle dimensions. These representative dimensions facilitate efficient filtering of road restriction combinations in subsequent tasks.}

\chz{To achieve this, we first partition the road network 
$G = (V, E, R_E, L_E)$ into dense cells 
$C_i = (V_i, E_i, R_E, L_E)$ using the PUNCH algorithm~\cite{PUNCH}. 
This partitioning confines the analysis of traffic flow data to individual cells, 
allowing the extraction of vehicle characteristics—such as height, width, and weight—that accurately reflect the traffic patterns within each region. 
Within each cell, each vehicle is abstracted as a three-dimensional vector 
$v = \langle he, wi, wt \rangle$, 
where $he$, $wi$, and $wt$ represent the vehicle’s height, width, and weight, respectively.
}

\chz{
To extract representative vehicle dimensions, we apply the $K$-means clustering algorithm to group vehicles with similar dimensions. The traffic flow data is partitioned into $K$ clusters \( \{\mathcal{VG}_1, \mathcal{VG}_2,\\ \dots, \mathcal{VG}_K\} \), where $K$ is empirically set to 30 based on observed trade-offs between computational efficiency and capturing representative dimension ranges. For each cluster $\mathcal{VG}_i$, we compute a representation vector $rv_i = \langle he_{\max}, wi_{\max}, wt_{\max} \rangle$, where $he_{\max}$, $wi_{\max}$, and $wt_{\max}$ represent the maximum height, width, and weight of vehicles within the cluster, respectively. 
}
\chz{
The underlying idea is to leverage the height, width, and weight attributes of the vector to construct a \textit{restriction combination} and build a corresponding \textit{shortcut path}. This ensures that every vehicle within the cluster can efficiently identify a \textit{feasible shortcut path} through the built \textit{shortcut path}, thereby reducing both computational and storage overhead due to the limited number of \textit{shortcut paths} maintained. Subsequently, we will formally define the representation vector and provide a rigorous theoretical foundation and proof for the proposed idea.
}

\chz{
\begin{definition}[representation vector]
\label{def:representation_vector}
Given a vehicle cluster $\mathcal{VG}$, the \textit{representation vector} is defined as a three-dimensional vector $rv = \langle he_{\max}, wi_{\max}, wt_{\max} \rangle$, where $he_{\max}$, $wi_{\max}$, and $wt_{\max}$ represent the maximum height, width, and weight of vehicles within the cluster $\mathcal{VG}$, respectively. 
\end{definition}
}
\chz{
\begin{lemma}
\label{lem:representation_vector_feasible}
For a given vehicle cluster $\mathcal{VG}$, the shortcut path built under the height $he$, width $wi$, and weight $wt$ of the cluster's representation vector $rv = <he, wi, wt>$ is guaranteed to be a \textit{feasible vehicle path} for all vehicles $c \in \mathcal{VG}$.
\end{lemma}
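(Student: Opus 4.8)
The plan is to reduce the claim to the already-established domination lemma (Lemma \ref{lem:dominate}), so that essentially no new work beyond an unwinding of definitions is needed. The crucial observation is that the representation vector $rv = \langle he_{\max}, wi_{\max}, wt_{\max}\rangle$ is meant to serve as a restriction combination. Thus the first thing I would do is make this identification explicit: set $rc = rv$, i.e. $rc.he = he_{\max}$, $rc.wi = wi_{\max}$, and $rc.wt = wt_{\max}$. By the phrase ``the shortcut path built under $rv$,'' I mean the path $\pi$ that is, by the definition of a feasible shortcut path, a feasible shortcut path for exactly this $rc$. Establishing this semantic bridge up front guarantees that the hypotheses of Lemma \ref{lem:dominate} can later be invoked verbatim.

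Next I would verify that every vehicle in the cluster is dominated by $rv$. Fix an arbitrary vehicle $c = (he, wi, wt) \in \mathcal{VG}$. By Definition \ref{def:representation_vector}, the components $he_{\max}$, $wi_{\max}$, and $wt_{\max}$ are the coordinate-wise maxima taken over all vehicles in $\mathcal{VG}$. Consequently $c.he \leq he_{\max} = rc.he$, $c.wi \leq wi_{\max} = rc.wi$, and $c.wt \leq wt_{\max} = rc.wt$, which is precisely the definition of $c$ being dominated by $rc$. This step is immediate from the maximality built into the representation vector and requires no estimation.

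Finally I would apply Lemma \ref{lem:dominate} directly: since $\pi$ is a feasible shortcut path for $rc = rv$ and $c$ is dominated by $rc$, the lemma concludes that $\pi$ is a feasible path for $c$. As $c$ was an arbitrary member of $\mathcal{VG}$, this shows $\pi$ is a feasible vehicle path for all vehicles in the cluster, which is the desired statement. I do not expect any genuine obstacle in this argument; the only point demanding care is the opening identification of $rv$ with a restriction combination, so that the ``built under $rv$'' phrasing matches the formal notion of a feasible shortcut path and the preconditions of Lemma \ref{lem:dominate} hold exactly. Everything else is a one-line consequence of taking maxima.
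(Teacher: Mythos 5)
Your proposal is correct and follows exactly the same route as the paper's own proof: identify the representation vector with a restriction combination $rc$, use the coordinate-wise maximality in Definition~\ref{def:representation_vector} to show every $c \in \mathcal{VG}$ is dominated by $rc$, and conclude via Lemma~\ref{lem:dominate}. No meaningful differences.
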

}

\chz{
\begin{proof}
    Consider a vehicle cluster \(\mathcal{VG}\) with its representation vector \(rv = \langle he, wi, wt \rangle\). Let the shortcut path \(\pi_{v_0, v_k} = (v_0, \ldots, v_k)\) be built based on the \textit{restriction combination} \(rc = ( rv.he, rv.wi, rv.wt )\). 
    For each vehicle \(c = ( he, wi, wt ) \in \mathcal{VG}\), by definition of the representation vector, we have \(c.he \leq rv.he\), \(c.wi \leq rv.wi\), and \(c.wt \leq rv.wt\). This implies that the vehicle \(c\) is \textbf{dominated by} \(rc\).
     By Lemma~\ref{lem:dominate}, \(\pi_{v_0, v_k}\) is a \textit{feasible vehicle path} for every vehicle \(c \in \mathcal{VG}\).
\end{proof}
}


\eat{
\chz{
The stage starts with partitioning the road network with restrictions \( G = (V, E, R_E, L_E) \) using the PUNCH algorithm~\cite{PUNCH}, dividing it into multiple cells \( C_i = (V_i, E_i, R_E, L_E) \), where \( i = 1, 2, \dots, n \). For each cell \( C_i \), traffic flow data is extracted, and each vehicle from the traffic flow data is represented as a three-dimensional vector \( \mathbf{v} = \langle \text{ht}, \text{wi}, \text{wt} \rangle \), where \( \text{ht} \), \( \text{wi} \), and \( \text{wt} \) denote the height, width, and weight of the vehicle, respectively.}

\chz{To effectively cluster vehicles with similar height, width and weight, the \( K \)-means clustering algorithm is applied to the vectorized data. The parameter \( K \), representing the number of clusters, is empirically set to 30 based on domain-specific observations. This clustering stage partitions vehicles into \( K \) groups, denoted as \( \{\mathcal{VG}_1, \mathcal{VG}_2, \dots, \mathcal{VG}_K\} \). For each cluster \( \mathcal{VG}_k \), we compute a three-dimensional vector \( \mathbf{r}_k = \langle \text{ht}_{\max}, \text{wi}_{\max}, \text{wt}_{\max} \rangle \), where \( \text{ht}_{\max}, \text{wi}_{\max}, \text{wt}_{\max} \) are the maximum height, width, and weight of all vehicles within \( \mathcal{VG}_k \). This vector, referred to as the \textit{representation vector}, serves as a compact yet comprehensive representation for each cluster, facilitating efficient downstream filtering of restriction combinations.}
}

\eat{
In this stage, we first utilize PUNCH \cite{PUNCH} to partition the given road network with restrictions $G=(V, E, R_E, L_E)$ into multiple cells, each denoted as $C_i = (V_i, E_i, R_{E}, L_{E}), \quad i = 1, 2, \dots, n
$. Then, we obtain \chz{extract} the traffic flow data within each cell $C_i$ and vectorize each vehicle in the traffic flow data as a three-dimensional vector, \ie $<he, wi, wt>$, where $he$, $wi$ and $wt$ denote the height, width, and weight of a vehicle, respectively. Subsequently, we specify the parameter $K$ (typically set to 30 in this paper) and use the K-means clustering algorithm to cluster these vehicle vectors into $K$ groups $\{\mathcal{VG}_1\, \mathcal{VG}_2\,\dots,\mathcal{VG}_K\}$. This groups similar vehicles into the same group, thereby classifying the vehicles based on their height, width, and weight.

After completing the vehicle grouping, we combine the maximum height, maximum width, and maximum weight of the vehicles in the group into a 3-dimensional vector to represent the group, instead of using the group’s center. We name this vector as the \textit{representation vector}.

The details of getting \textit{representation vectors} are shown in Algorithm \ref{alg:Getting_representation_vectors}. This algorithm takes the cells of road network $\{C_1,C_2,\ldots,C_n\}$ and the groups of vehicles $\{\mathcal{VG}_1\, \mathcal{VG}_2\,\dots,\mathcal{VG}_K\}$ as input. In lines 3-9, the algorithm processes each cell \(C_i\). Specifically, in lines 6-7, the algorithm extracts the maximum values of each type of restriction within each vehicle group and combines them to form a representative vector \(rv_i\). Finally, in line 10, the algorithm returns the set of all representation vectors.
}

\chz{
The process for computing the \textit{representation vectors}, as detailed in Algorithm~\ref{alg:Getting_representation_vectors}, takes the cells of the road network $\{C_1, C_2, \ldots, C_n\}$ and the vehicle groups $\{\mathcal{VG}_1, \mathcal{VG}_2, \ldots, \mathcal{VG}_K\}$ as input. In lines 3--9, the algorithm iteratively processes each cell \( C_i \) in the road network. For each vehicle group $\mathcal{VG}_k$ within a cell, lines 6--7 compute the maximum values of height, width, and weight across all vehicles in the group to form a representative vector \( rv_k \), which summarizes the key attributes of the group. Finally, in line 10, the algorithm aggregates all representation vectors into the global set \( RV \), providing a compact summary of vehicle characteristics to facilitate further filtering and optimization.
}

\begin{algorithm}
  \caption{Computing representation vectors} \label{alg:Getting_representation_vectors}
    \KwIn {$\mathcal{VG}$:set of vehicle groups$\{\mathcal{VG}_1,\mathcal{VG}_2,\ldots,\mathcal{VG}_n\}$\\
    $C$: set of road network cells$\{C_1,C_2,\ldots,C_n\}$} 
    \KwOut{RV: set of representation vectors \\$\{RV_1,RV_2,\dots,RV_n\}$}

    $RV \gets \varnothing$\\
    \For{each $C_i$ $\in$ $C$}{

        $RV_i \gets \varnothing$\\
        \For{each $\mathcal{VG}_i$ $\in$ $\mathcal{VG}$}{
            $rv_i \gets  \varnothing$\\
            \For{$\mathcal{T} \in$ $\{he,wi,wt\}$}{
                $rv_i.\mathcal{T} \gets \max\{c_j.\mathcal{T} \mid c_j \in      C_i\}$       
            }
             $RV_i \gets$ $RV_i$ $\cup$ $\{rv_i\}$
        }
        $RV \gets$ $RV \cup RV_i$
    }
    
    \KwRet{RV}
    
\end{algorithm}

\eat{
In this method, we will represent a vehicle with a 3-dimensional vector, which consists of three components: height, width and weight, which represent the height, width and weight of the vehicle, respectively. In this paper, we denote the vector of vehicles as $v$, and the set of vectors of all vehicles in the traffic flow in the whole road network denote as $V$. 
The purpose of this method is to roughly group vehicles into several clusters by using the clustering algorithm, where vehicles of similar height, width, and weight are grouped into a same cluster. The set of vectors for all vehicles in each cluster is denoted as $V_i (i = 1,2...,K)$. The k-means clustering algorithm has the advantages of fast convergence, relatively simple implementation and excellent clustering effect, which is very suitable for our requirements, so we choose k-means clustering algorithm as the core algorithm of this method.

We first obtain the traffic flow information of the whole road network, and represent each vehicle of the traffic flow by a 3-dimensional vector shown in figure 3. After that we input these vectors as samples into the k-means clustering algorithm and specify the number of clusters $K$ of the k-means algorithm (In order to get good clustering results, the value of $K$ is usually set to 10), and after a period of time the algorithm outputs the results of the clustering.
}

\eat{
The k-means clustering algorithm initially clusters vehicles that are close in height, width, and weight into clusters, and we consider each cluster to be a group of vehicles. We combine the maximum height, maximum width, and maximum weight (denote as $MAX_{height}$, $MAX_{width}$ and $MAX_{weight}$ respectively) of the vehicles in the group as a 3-dimensional vector to represent a group instead of the center of a group. For example, as shown in figure ......

Representing a group by this vector ensures that each component of the vector is bigger than or equal to the corresponding component of each vector of vehicles of this group, \ie $ \forall v \in V_i$ satisfies that $ v.height \leq V_i.MAX_{height} $,$ v_i.width \leq V_i.MAX_{width} $, and $ v_i.weight \leq V_i.MAX_{weight} $. Thus ensuring that the shortcut created for the group by using this vector allows all vehicles in the group to pass through.
}



 
\stitle{Cell-aware Refinement}
\eat{
At this stage, we integrate the previously obtained \textit{representation vectors} with the actual road restriction combinations within each cell of the road network. Consequently, we map the concentrated range information of vehicle height, width, and weight to the corresponding road restrictions. This process achieves the preliminary filtering of road restriction combinations, retaining those frequently utilized by the majority of vehicles. Specifically, the method for integrating road restrictions with representation vectors involves refining each component of the representation vectors based on the actual road restrictions.
}
\chz{
In this stage, we integrate the previously computed \textit{representation vectors} with the actual road restriction combinations specific to each cell in the road network, tailoring this integration to the unique restrictions of each cell. By mapping the aggregated ranges of vehicle height, width, and weight to the corresponding road restrictions, our method enables efficient filtering of road restriction combinations, significantly reducing the number of stored restriction combinations and the corresponding index size. This process not only optimizes storage requirements but also simplifies subsequent queries, thereby enhancing the overall system performance.
}

\chz{
Before delving into the specific methodology, we first introduce the concepts related to the mapping of \textit{representation vectors} and the objectives of filtering road restriction combinations.
}

\chz{
\begin{definition}[representation vector mapping]
\label{def:representation_vector_mapping}
Consider a cell \(C\) in a road network, with the set of \textit{restriction combinations} \(RC = \{rc_1, rc_2, \dots, rc_n\}\) and the set of \textit{representation vectors} \(RV = \{rv_1, rv_2, \dots, rv_k\}\). A representation vector mapping is defined as the process of mapping each \(rv_i \in RV\) to a unique \(rc_j \in RC\), forming a one-to-one correspondence \(f: RV \to RC\) such that for every \(rv_i \in RV\), there exists a unique \(rc_j \in RC\).
\end{definition}
}

\chz{
In this paper, we define the mapping method $f$ for the representation vector as mapping each \(rv\) to the \(rc\) in the set of \textit{road restrictions} \(RC\) with the smallest Euclidean norm with respect to the current \(rv\). This mapping method is to get the most similar restriction combination for representation vector. Formally, the mapping method is defined as:
\begin{equation}
\label{equ:vector_mapping}
    f(rv) = \arg\min_{rc \in RC} \| rv - rc \|_2
\end{equation}
where \(\| rv - rc \|_2\) represents the Euclidean norm between \(rv\) and \(rc\).
}
\chz{
Thus, in a given cell $C$, our objection for the mapping function is defined as:
\begin{equation}
\label{equ:vector_mapping_objection}
minimize \sum^{K}_{i=1}\| rv_i - f(rv_i) \|_2
\end{equation}
}

\chz{
}

\chz{
A naive approach to solving the mapping problem involves constructing all possible road restriction combinations within a given cell and iterating through them to determine the corresponding \(rc\) for each representation vector \(rv\). While conceptually straightforward, this approach suffers from significant computational inefficiency, as its time complexity scales as \(O(K\cdot N_{he} \cdot N_{wi}\cdot N_{wt})\) for a single cell, where $K$ denotes the number of representation vectors, and $N_{he}$, $N_{wi}$, and $N_{wt}$ represent the number of height, width, and weight restrictions, respectively. Such a prohibitive cost renders this approach impractical for large-scale road networks.
To overcome these limitations, we introduce a novel and computationally efficient method that eliminates the need to exhaustively generate and evaluate all road restriction combinations. By reconsidering and systematically decomposing Equation~\ref{equ:vector_mapping}, we identify that the mapping condition for each attribute \(\mathcal{T} \in \{he, wi, wt\}\) can be independently optimized. This reformulation enables the mapping process to be expressed as the minimization of the \(L_2\)-norm for each attribute:
\begin{equation}
\label{equ:type_representation_vector_mapping}
    f(rv).\mathcal{T} = \arg\min_{rc \in RC} \| rv.\mathcal{T} - rc.\mathcal{T} \|_2
\end{equation}
, where \(rv.\mathcal{T}\) and \(rc.\mathcal{T}\) denote the values of the attribute \(\mathcal{T}\) in the representation vector and road restriction combination, respectively.
This decomposition fundamentally reduces the computational overhead by allowing independent processing of each road restriction type, obviating the need for pre-computing all possible combinations.
}

\chz{
Based on the above analysis, we propose an efficient algorithm for computing the mapping process, as shown in Algorithm~\ref{alg:Efficient_Mapping_Processing}. The algorithm begins by iterating through each road network cell $C_i \in C$, aggregating road restrictions from all edges within the cell (Lines 1--6). For each edge $(u, v) \in E_i$ and each attribute $\mathcal{T} \in \{he, wi, wt\}$ (representing height, width, and weight), the algorithm collects non-empty road restrictions $R_{u,v}^\mathcal{T}$ into the corresponding restriction set $\mathcal{R}_i^\mathcal{T}$ for the cell. Once all restrictions are aggregated, the algorithm sorts each restriction set $\mathcal{R}_i^\mathcal{T}$ in ascending order to enable efficient mapping queries (Lines 7--8). 
Next, the algorithm processes the representation vectors $RV_i$ associated with $C_i$. For each vector $rv \in RV_i$, it iteratively maps each attribute $\mathcal{T}$ to the closest road restriction in $\mathcal{R}_i^\mathcal{T}$ based on Equation~\ref{equ:type_representation_vector_mapping} (Lines 9--13). 
It is evident that the time complexity of the proposed algorithm is $O(K \cdot (N_{he} + N_{wi} + N_{wt}))$, which represents a significant reduction compared to the naive approach.
}

\eat{
First, we process each cell to extract the existing road restrictions, categorizing them into three types: height restrictions, width restrictions, and weight restrictions. After classification, we further refine the data by sorting each category of restrictions in ascending order. This approach ensures a comprehensive understanding of the road restrictions within each cell.

The details are shown in Algorithm \ref{alg:Efficient_Mapping_Processing}. In lines 2 to 8, Algorithm \ref{alg:Efficient_Mapping_Processing} processes the road restrictions within each cell of the road network. The algorithm first traverses all edges within each cell, extracts the existing road restrictions, and stores each type of road restriction in the corresponding set for that cell. In line 5, $\mathcal{R}_i.he$, $\mathcal{R}_i.wi$, and $\mathcal{R}_i.wt$ represent the sets of height, width, and weight restrictions within cell $C_i$, respectively. Finally, in lines 7 to 8, it sorts the restrictions of each type in ascending order within each cell.
}

\eat{
\chz{
To enable this integration, we first extract and categorize road restrictions within each cell into three types: height, width, and weight. These restrictions are sorted in ascending order to ensure compatibility with the representation vectors and precise alignment with the aggregated vehicle feature ranges. This preprocessing step reduces the number of stored road restriction combinations, lowering storage and computational overhead.
}

\chz{
Algorithm~\ref{alg:Efficient_Mapping_Processing} outlines the process of extracting and structuring road restrictions. Taking the set of road network cells $\{C_1, C_2, \ldots, C_n\}$ as input, the algorithm outputs a structured representation of restrictions $\{\mathcal{R}_1, \mathcal{R}_2, \ldots, \mathcal{R}_n\}$. For each cell \( C_i \), it iterates over all edges $(u, v)$ to extract restrictions of height, width, and weight (lines 3--5). If a valid restriction exists for a given type $\mathcal{T}$, it is added to the corresponding set $\mathcal{R}_i^\mathcal{T}$. Finally, in lines 7--8, these restrictions are sorted in ascending order to enable efficient queries and optimizations.
}
}

\begin{algorithm}
  \caption{Efficient Mapping Processing for Representation Vectors} 
  \label{alg:Efficient_Mapping_Processing}
  \KwIn{$C$: set of road network cells $\{C_1, C_2, \ldots, C_n\}$\\
  RV: set of representation vectors $\{RV_1, RV_2, \dots, RV_n\}$}
  
  \For{each $C_i \in C$}{
    \For{each $(u,v) \in E_i$}{
      \For{$\mathcal{T} \in \{he, wi, wt\}$}{
        \If{$R_{u,v}^\mathcal{T} \neq \varnothing$}{
          $\mathcal{R}_i^\mathcal{T} \gets \mathcal{R}_i^\mathcal{T} \cup R_{u,v}^\mathcal{T}$ \\ //store restrictions in corresponding sets
        }
      }
    }

    \For{$\mathcal{T} \in \{he, wi, wt\}$}{
      sort $\mathcal{R}_i^\mathcal{T}$ in ascending order
    }

    \For{each $RV_i \in RV$}{
      \For{each $rv \in RV_i$}{
        \For{$\mathcal{T} \in \{he, wi, wt\}$}{
          $rv.\mathcal{T} \gets \arg\min\limits_{r \in \mathcal{R}_i^{\mathcal{T}}} \| rv.\mathcal{T} - r \|_2$
          \\ //map $rv$ attributes based on Equation~\ref{equ:type_representation_vector_mapping}
        }
      }
    }
  }
\end{algorithm}

\eat{
During the refinement process, the following four cases may arise: the first case is that a component (height, width, or weight) of the representation vector is less than the minimum value of the corresponding restriction of the partition, the second case is that a component of the representation vector is between two values of the corresponding restriction of the partition, the third case is that a component of the representation vector is bigger than the maximum value of the corresponding restriction of the partition, and the fourth case is that a component of the representation vector is exactly equal to the value of a corresponding restriction of the partition.
Subsequently, we will provide a comprehensive explanation on how to address these four cases.
For the first case, we assign the value of the component to the minimum value of the corresponding restriction of the partition. For the second case, we calculate the Euclidean distance (\ie the absolute value of the difference) between the component and the two corresponding restrictions within the partition, and assign the value of the component to the nearest restriction. For the third case, we set the component to be equal to the maximum value of that restriction within the partition. Finally, the fourth case is an optimal case, so we do nothing for this case.
}

\eat{
During the refinement process, the following four scenarios may arise: 1) A component of the representation vector (i.e., height restriction component, width restriction component, or weight restriction component) is less than the minimum restriction in the corresponding type of road restriction set for the cell. 2) A component of the representation vector lies between two adjacent restrictions in the sorted corresponding type of road restriction set for the cell. 3) A component of the representation vector exceeds the maximum restriction in the corresponding type of road restriction set for the cell. 4) A component of the representation vector is exactly equal to a restriction in the corresponding type of road restriction set for the cell.
}

\eat{
\chz{
During the refinement process, four scenarios may occur based on the relationship between a component of the representation vector (height, width, or weight) and the corresponding type of road restriction set for the cell: 
1) The component is less than the minimum restriction in the set. 
2) The component lies between two consecutive restrictions in the sorted set. 
3) The component exceeds the maximum restriction in the set. 
4) The component is exactly equal to a restriction in the set.
}

\chz{
During the refinement process, each component of the representation vector \( x \) (representing height, width, or weight) is adjusted to ensure compatibility with the corresponding road restriction set \( R \) of the cell. This adjustment is formalized as:
\[
x' = \underset{r \in R}{\operatorname{argmin}} \, |x - r|
\]
}

\chz{where \( R = \{r_1, r_2, \dots, r_n\} \) is the sorted restriction set. This formula unifies all possible scenarios during the refinement:
1) If \( x < \min R \), the component is mapped to \( \min R \).
2) If \( x > \max R \), the component is mapped to \( \max R \).
3) If \( x \in R \), the component remains unchanged.
4) Otherwise, the component is mapped to the nearest restriction in \( R \) based on the Euclidean distance.}
}

\eat{
Next, we will provide a detailed explanation of how to address these four scenarios. For the first scenario, if a component of the representation vector is less than the minimum value of the corresponding restriction in the cell, we assign the component's value to this minimum restriction. For the second scenario, if a component of the representation vector lies between two adjacent restrictions in the sorted set of the cell, we calculate the Euclidean distance (i.e., the absolute value of the difference) between the component and these two restrictions, and assign the component's value to the nearest restriction. For the third scenario, if a component of the representation vector exceeds the maximum value of the corresponding restriction in the cell, we set the component's value to this maximum restriction. Finally, for the fourth scenario, if a component of the representation vector is exactly equal to a restriction in the corresponding set of the cell, we consider this an optimal case and do not perform any further adjustments.
}

\eat{
Subsequently, we will provide a comprehensive explanation on how to address these four scenarios. For the first scenario, we assign the value of the component to the minimum value of the corresponding restriction of the partition. For the second scenario, we calculate the Euclidean distance (\ie the absolute value of the difference) between the component and the two corresponding restrictions within the partition, and assign the value of the component to the nearest restriction. For the third scenario, we set the component to be equal to the maximum value of that restriction within the partition. Finally, the fourth scenario is an optimal case, so we do nothing for this scenario.
}

\eat{
\begin{algorithm}
  \caption{Cell-aware Refinement}
   \label{alg:cell_aware_refinement}
    \KwIn {$C$: cells of the road network $\{C_1,C_2,\ldots,C_n\}$\\
        $RV$: representation vectors $\{RV_1,RV_2,\ldots,RV_n\}$\\
        $\mathcal{R}$: road restrictions$\{\mathcal{R}_1,\mathcal{R}_2,\dots,\mathcal{R}_n\}$
     }
    \KwOut{Restriction combination $RC$}
    
    $RC \gets \varnothing$
    
    \For{each cell $C_i$ $\in$ $C$}{
        
        $RC_i \gets \varnothing$
        
        \For{each $rv_j$ $\in$ $RV_i$}{
            \For{$\mathcal{T} \in$ $\{he,wi,wt\}$}{

                $r_{min}$ $\leftarrow$ $\mathcal{R}_i.\mathcal{T}.\text{first}$ \\
                //Get the minimum restriction\\
                $r_{max}$ $\leftarrow$ $\mathcal{R}_i.\mathcal{T}.\text{last}$ \\
                //Get the maximum restriction

                \If{$rv_j.\mathcal{T}$ $<$ $r_{min}$ $\lor$ $rv_j.\mathcal{T}$ $>$ $r_{max}$}{
                    \If{$rv_j.\mathcal{T}$ $<$ $r_{min}$}{
                        $rv_j.\mathcal{T}$ $\leftarrow$ $r_{min}$
                    }
                    \Else{
                        $rv_j.\mathcal{T}$ $\leftarrow$ $r_{max}$
                    }
                }
                \Else{
                    \If{$r_a<r_j.\mathcal{T}<r_b$ $\land$ $adj(r_a, r_b, \mathcal{R}_i.\mathcal{T})$\\}{
                        \If{$|r_a - rv_j.\mathcal{T}|$ $\leq$ $|rv_j.\mathcal{T} - r_b|$}{
                            $rv_j.\mathcal{T}$ $\leftarrow$ $r_a$
                        }
                        \Else{
                            $rv_j.\mathcal{T}$ $\leftarrow$ $r_b$
                        }
                    }
                    \ElseIf{$rv_j.\mathcal{T}$ is equal to element $r_c$}{
                            continue
                    }
                }
            }

            }
        
            $RC_i$.add$(rv_j)$\\
            //Insert the fully processed vector into $RC$

        }
        $RC$ $\cup$ $RC_i$

    \KwRet{$RC$}
\end{algorithm}
}

\eat{
The details are shown in Algorithm \ref{alg:cell_aware_refinement}. The algorithm takes as input the cells of road network $\{C_1,C_2,\ldots,C_n\}$ and the representation vector sets $\{RV_1,RV_2,\ldots,RV_n\}$ obtained from the previous step using the K-means clustering algorithm. Here, $C_i$ represents a cell of the road network, and $RV_i = \{rv_1, rv_2,\ldots, rv_m\}$ in the traffic representative vector set denotes the set of representative vectors for the traffic clustering results of cell $C_i$. Then it processes each component of the representation vector corresponding to each cell from lines 3 to 21. Specifically, in lines 4 to 6, the algorithm first retrieves the maximum value ($res\_max$) and the minimum value ($res\_min$) of the road restrictions corresponding to the type of \eat{the }component being processed. Subsequently, in lines 8 to 13, the algorithm addresses the first and second scenarios previously mentioned. In lines 13 to 19, it handles the third scenario by assigning the value of the representation vector component to the nearest road restriction value based on Euclidean distance. Lastly, in lines 20 to 21, the algorithm performs no operations for the fourth scenario. In line 22, the algorithm inserts the fully refined vector $rv_j$ into the corresponding set, where $RC_i$ represents the retained road restriction combinations for cell $C_i$. Upon completing all steps, the algorithm returns the retained road restriction combinations for all cells in line 25.

After the cell-aware refinement process, each component of the refined representation vectors is composed of actual road restrictions within the corresponding cell. In other words, by integrating the representation vectors with the actual road restrictions, we generate a series of road restriction combinations, thereby initially filtering these combinations.
}


\eat{
In the previous method, we grouped the vehicles in the traffic flow of the road network with similar size and weight into one group, and ensured that the number of vehicles in each group was reasonable, so that most of the vehicle sizes with high frequency of occurrence were also retained. But in fact in the road restriction plays a decisive role in the road restriction, so we have to introduce the road restriction into the vehicle groups, so that we can achieve the purpose of using the sizes and weight of the majority of the vehicles to filter the combination of the restrictions in the road network.
The purpose of this method is to optimize the vectors of each vehicles group by using the real road constraints of each partition, so that each component of the vector consists of the real road restrictions.

The first two methods get some grouping of vehicles and we use a vector to represent all the vehicles within a group, which results in the following 3 cases: the first case is that a component (height, width, or weight) of the vector of the vehicle group is less than the minimum value of the corresponding restriction of the partition, the second case is that a component of the vector of the vehicle group is between two values of the corresponding restriction of the partition, the third case is that a component of the vector of the vehicle group is bigger than the maximum value of the corresponding restriction of the partition, and the fourth case is that a component of the vector of the vehicle groups is exactly equal to the value of a corresponding restriction of the partition. We will design reasonable solutions for each of these four cases.
For the first case, we make the component of the vector that is less than the minimum of the corresponding restriction of the partition equal to the minimum of the corresponding restriction of the partition, for the second case, we compute the Euclidean distance (\ie the absolute value of the difference) between that component and the corresponding two restrictions of the partition and assign the component of the vector to the closer restriction, for the third case, this means that the solution for this group of vehicles can only be found for the portion of the vehicles whose vector components are less than or equal to the maximum value of the corresponding restriction in the partition, so we make the component of the vector larger than the maximum value of the corresponding restriction of the partition equal to the maximum value of the corresponding restriction, and the fourth case is an optimal case, so we do nothing for this case.
}

\eat{
The whole algorithm of Partition-aware Refinement is shown in Algorithm 1, we take the representation vectors of each group and each partition as inputs.
Initially, it classifies and sorts various types of restrictions within each partition (line 2), ensuring they are organized in ascending order. Subsequently (lines 4 to 16), the algorithm optimizes scenarios based on these partition-specific restrictions. It constructs a tailored set of road restriction combinations for each partition and returns the results at line 18.
}



\eat{
\begin{algorithm}
  \caption{Partition-aware Refinement}
    \KwIn {Partitions of the road network $\{P_1,P_2,\ldots,P_n\}$\\
        \text{\qquad}\text{ \ \ } Representation vector $\{r_1,r_2,\ldots,r_n\}$
     }
    \KwOut{Restriction combination $RC$}

    
    \For{each partition $P_i$}{
        sort each type of road restriction $res$ in ascending order
        
        
        \For{each vector $r_j$}{
        
            \For{each component $c$ in $r_j$}{
            get corresponding type sorted restriction $\{res_1, res_2, \ldots, res_m\}$\\
                \If{$c$ $<$ $res_1$} 
                {
                    $c$ $\leftarrow$ $res_1$
                }
                \ElseIf{$res_a < c < res_b$}{
                    \If{$|res_a - c|$ $\leq$ $|c - res_b|$}
                    {
                            $c$ $\leftarrow$ $res_a$
                    }
                    \Else
                    {
                            $c$ $\leftarrow$ $res_b$
                    }
                }
                \ElseIf{$c$ $>$ $res_m$}
                {
                    $c$ $\leftarrow$ $res_m$ 
                }
                \ElseIf{$c$ $=$ $res_c$}
                {
                    continue
                }
            }
            
            Insert $v_j$ to $RC[i]$
        
        }

    }
    Return $RC$
\end{algorithm}
}

\stitle{Combination Rematch.}
\chz{
In real-world scenarios, most vehicles exhibit a consistent proportional relationship among height, width, and weight, with these ratios typically falling within a certain range. However, we observe that some vehicles deviate significantly from this standard pattern. For example, certain trucks may have similar height and width but vary greatly in weight, while some Special Purpose Vehicles show disproportionate ratios of height, width, and weight compared to typical passenger vehicles. These vehicles tend to result in longer paths in path planning. To substantiate this, we will first introduce key concepts and provide the corresponding theoretical framework and proofs.
}

\chz{
\begin{definition}[Feasible Edge Set]
\label{def:feasible_edge_set}
In a given cell of a road network $C=(V,E,R_E, L_E)$, the set of feasible edges \( E_{rc} \) for a restriction combination \( rc = (he, wi, wt) \) is defined as:
\[
E_{rc} = \{ e \in E \mid rc.he \leq R^{he}_e, \, rc.wi \leq R^{wi}_e, \, rc.wt \leq R^{wt}_e \}.
\]
\end{definition}
}

\chz{
\begin{lemma}
\label{lem:rematch_reason}
    In a cell \( C \) of a road network, consider two shortcut paths 
    \( \pi_{v_0,v_k}^{rc_i} = (v_0, \dots, v_k) \) and 
    \( \pi_{v_0,v_k}^{rc_j} = (v_0, \dots, v_k) \) with \( i \neq j \), 
    that connect the entry vertex \( v_0 \) to the exit vertex \( v_k \). 
    If \( rc_i.he \leq rc_j.he \), \( rc_i.wi \leq rc_j.wi \), and 
    \( rc_i.wt \leq rc_j.wt \), then:
    \[
    dist(\pi_{v_0,v_k}^{rc_i}) \leq dist(\pi_{v_0,v_k}^{rc_j}).
    \]
\end{lemma}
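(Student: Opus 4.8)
The plan is to reduce the distance comparison to a statement about the feasible edge sets of the two restriction combinations, relying on the fact that, by the definition of the \emph{shortest feasible shortcut path}, each $\pi_{v_0,v_k}^{rc}$ is the minimum-distance path from $v_0$ to $v_k$ among all feasible shortcut paths under $rc$. First I would invoke Definition~\ref{def:feasible_edge_set} to describe the feasible edge sets $E_{rc_i}$ and $E_{rc_j}$ and establish the inclusion $E_{rc_j} \subseteq E_{rc_i}$. Indeed, for any edge $e \in E_{rc_j}$ we have $rc_j.he \le R^{he}_e$, and chaining with the hypothesis $rc_i.he \le rc_j.he$ yields $rc_i.he \le R^{he}_e$; repeating the same chaining for width and weight shows $e \in E_{rc_i}$. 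Thus the more restrictive combination $rc_j$ admits no edges that $rc_i$ forbids.

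Next I would observe that $\pi_{v_0,v_k}^{rc_j}$ is, by hypothesis, a feasible shortcut path under $rc_j$, so every edge on it lies in $E_{rc_j}$, and hence in $E_{rc_i}$ by the inclusion just established. Therefore $\pi_{v_0,v_k}^{rc_j}$ is itself a feasible shortcut path under $rc_i$ connecting the same entry vertex $v_0$ to the same exit vertex $v_k$; in other words, it belongs to the set of candidate paths over which $\pi_{v_0,v_k}^{rc_i}$ is defined to be the minimizer. Applying the definition of the \emph{shortest feasible shortcut path}, since $\pi_{v_0,v_k}^{rc_i}$ attains the smallest distance among all feasible shortcut paths under $rc_i$ and $\pi_{v_0,v_k}^{rc_j}$ is one such path, we conclude $dist(\pi_{v_0,v_k}^{rc_i}) \le dist(\pi_{v_0,v_k}^{rc_j})$, as required.

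I do not expect a serious obstacle: the statement is essentially the monotonicity of shortest-path length under edge-set inclusion (enlarging the set of usable edges can only shorten the optimum). The only points deserving care are to route the argument through the definition of the shortest feasible shortcut path rather than re-deriving shortest-path monotonicity from first principles, and to make explicit that both paths share the endpoints $v_0$ and $v_k$, so that $\pi_{v_0,v_k}^{rc_j}$ is genuinely an admissible competitor in the minimization that defines $\pi_{v_0,v_k}^{rc_i}$.
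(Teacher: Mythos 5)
Your proposal is correct and follows essentially the same route as the paper's proof: both establish the edge-set inclusion $E_{rc_j} \subseteq E_{rc_i}$ from Definition~\ref{def:feasible_edge_set} and then conclude by the minimality of the shortest feasible shortcut path under $rc_i$. If anything, your write-up is slightly more careful than the paper's at the final step, since you explicitly exhibit $\pi_{v_0,v_k}^{rc_j}$ as an admissible competitor in the minimization defining $\pi_{v_0,v_k}^{rc_i}$, where the paper only remarks that the $rc_i$-path ``can only use the same or more feasible edges.''
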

}

\chz{
\begin{proof}
Given that \( rc_i.he \leq rc_j.he \), \( rc_i.wi \leq rc_j.wi \), and \( rc_i.wt \leq rc_j.wt \), by Definition~\ref{def:feasible_edge_set}, we have \( E_{rc_j} \subseteq E_{rc_i} \). Since \( \pi_{v_0,v_k}^{rc_i} \) and \( \pi_{v_0,v_k}^{rc_j} \) are the shortest paths from \( v_0 \) to \( v_k \) under \( rc_i \) and \( rc_j \), respectively, and \( E_{rc_j} \subseteq E_{rc_i} \), the path \( \pi_{v_0,v_k}^{rc_i} \) can only use the same or more feasible edges than \( \pi_{v_0,v_k}^{rc_j} \). Therefore, \( dist(\pi_{v_0,v_k}^{rc_i}) \leq dist(\pi_{v_0,v_k}^{rc_j}) \).
\end{proof}
}

Building on Lemma~\ref{lem:rematch_reason}, it is clear that these vehicles encounter significant challenges in obtaining \textit{shortest feasible vehicle paths} during shortcut matching. This often leads to suboptimal or unattainable paths. Consequently, targeted optimization strategies for these vehicles can effectively enhance the path planning system.

These vehicles can be broadly classified into two categories: The first category includes vehicles where one dimension (e.g., height, width, or weight) is significantly larger than the others. The second category consists of vehicles where one dimension is significantly smaller than the others, such as vehicles with weight much lower than their height and width. This imbalance reduces the effectiveness of path planning.

\begin{figure}[htbp]
    \centering
    \includegraphics[width=3.35in]{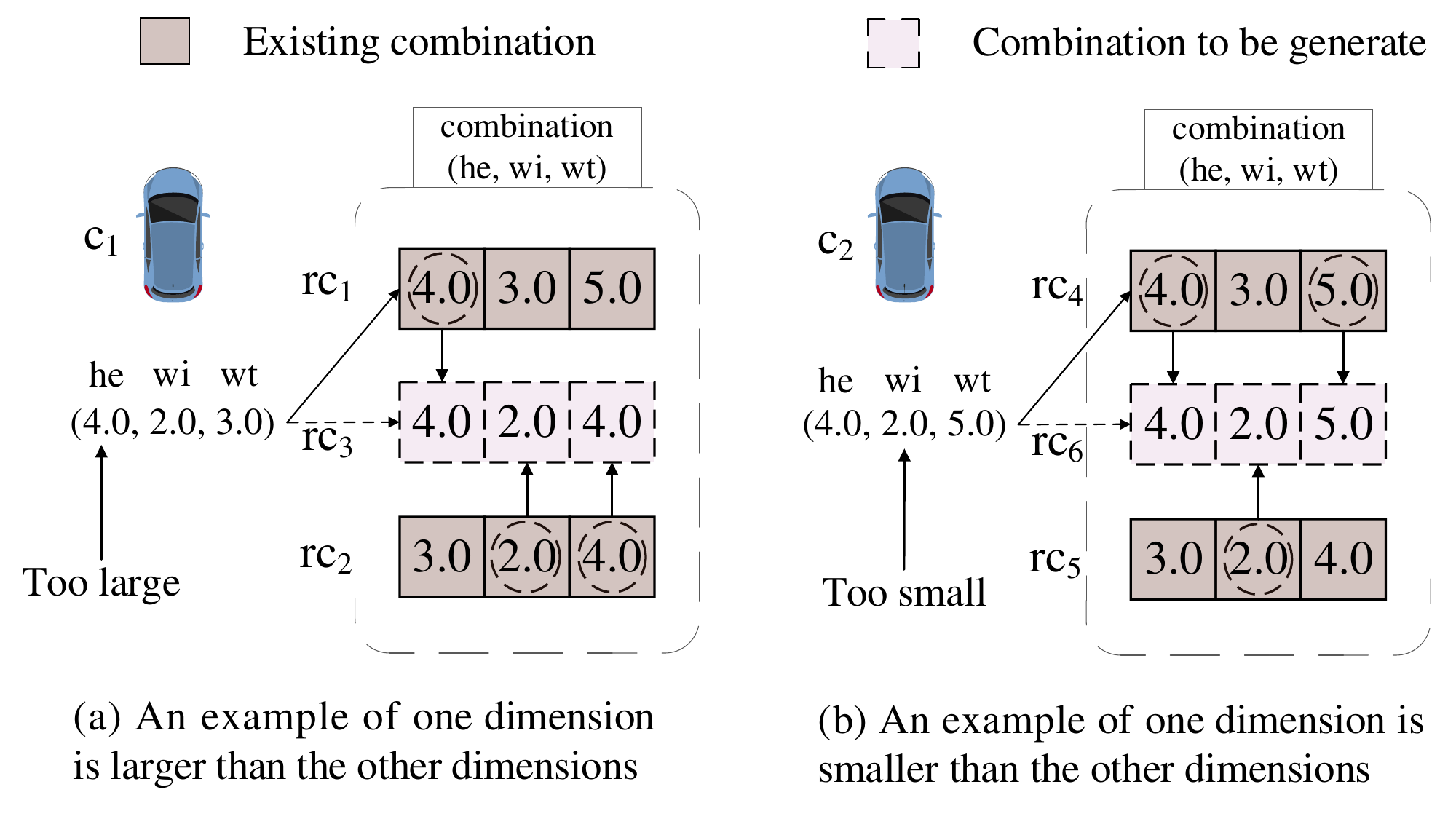}
    \caption{An example to illustrate the principle of combination rematch. The dashed boxes contain the currently existing restriction combinations, where he, wi, and wt denote height, width, and weight, respectively.}
    \label{fig:rematch}
\end{figure}

\begin{algorithm}
  \caption{Combination Rematch} 
   \label{alg:combination_rematch}
    \KwIn {$C$: cells of the road network $\{C_1,C_2,\ldots,C_n\}$\\
    $RC$: restriction combinations $\{RC_1,RC_2,\ldots,RC_n\}$} 

    
    \For{each $C_i$}{
        \For{each $rc_j$ $\in$ $RC_i$}{
         sort $RC_i$ in ascending order\\
             \For{$\mathcal{T} \in$ $\{he,wi,wt\}$}{
                $R$ $\leftarrow$ $\{rc.\mathcal{T}\mid rc \in RC_i\}$\\
                sort $R$ in ascending order\\
                $RC_i \gets RC_i$ $\cup$ $(\textsc{REMATCH}(R, RC_i, rc_j, \mathcal{T}))$
             }
        }

    }

    \SetKwFunction{rematchFunc}{REMATCH} 
    \SetKwProg{Fn}{Function}{:}{\KwRet} 

\Fn{\rematchFunc{$R, RC, rc, \mathcal{T}$}}{
    $M$ $\leftarrow$ $\varnothing$\\  
    $pos$ $\leftarrow$ index($rc.\mathcal{T}$, $R$)\\ 
     //get the position of $rc.\mathcal{T}$ in $R$\\
    \For{$j \leftarrow 1$ \KwTo $2$}{
        $rc_{temp}$ $\leftarrow$ $rc$, $rc_{temp}.i$ $\leftarrow$ $R[pos+j]$\\
        \If{$\theta(rc_{temp}, RC) \geq f* Total\_vehicles$ }{
            $M.add(rc_{temp})$
        }
        $rc_{temp}$ $\leftarrow$ $rc$, $rc_{temp}.i$ $\leftarrow$ $R[pos-j]$\\
        \If{$\theta(rc_{temp}, RC) \geq f* Total\_vehicles$ }{
            $M.add(rc_{temp})$
        }
    }
    \KwRet{$M$}
}

\end{algorithm}

\eat{
\begin{algorithm}
\caption{Advanced Combination Rematch}
\label{alg:advanced_rematch}

\KwIn{ 
    $C$: Cells of the road network $\{C_1, C_2, \ldots, C_n\}$ \\
    $RC$: Restriction combinations $\{RC_1, RC_2, \ldots, RC_n\}$
}
\For{each cell $C_i$}{
    Construct BSTs $T_h$, $T_w$, $T_{wt}$ from $RC_i$ for $h$, $w$, and $wt$\;
    \For{each restriction combination $rc_j = (h_j, w_j, wt_j) \in RC_i$}{
        \For{each restriction type $T \in \{h, w, wt\}$}{
            $T \leftarrow$ BST corresponding to $T$\;
            $AdjValues \leftarrow$ FindAdjacentValues($T$, $rc_j.T$)\;
            \For{each $val \in AdjValues$}{
                $rc_{\text{new}} \leftarrow rc_j$\;
                $rc_{\text{new}}.T \leftarrow val$\;
                \If{$\theta(rc_{\text{new}}, RC_i) \geq f \times \text{Total\_vehicles}$}{
                    $RC_i \leftarrow RC_i \cup \{rc_{\text{new}}\}$\;
                    Update BST $T$ with $val$\;
                }
            }
        }
    }
}
\end{algorithm}

}

\eat{
In real-world scenarios, the proportions of height, width, and weight for most vehicles generally fall within a specific range. However, some vehicles deviate from this pattern.
After observation, we identified that these vehicles mainly fall into two cases.
The first case involves a vehicle having one dimension (height, width, or weight) that is significantly larger than the others. This leads to the vehicle being matched with shortcuts suitable for larger vehicles, which have larger road restriction combinations, thereby resulting in a longer planned path and potentially causing the path planning query to lose a solution. 
The second case is that one of the vehicle's dimensions is less than the others, similar to the first case, which also causes the vehicle to lose the optimal path. 
In order to ensure the path planning effectiveness for these vehicles, we need to optimize the our method for these two cases.
}



For the two cases discussed above, the core idea of our optimization is to recombine the components of the different existing restriction combinations to create new combinations. Before rematch, we first sort the road restriction combinations in ascending order based on their restriction values. Then, we give two examples to illustrate our approach. For the first case, as shown in Figure 4a, for vehicle $c_1=(4.0,2.0,3.0)$ to find an acceptable path, we recombine the two combinations $rc_1=(4.0,3.0,5.0)$ and $rc_2=(3.0,2.0,4.0)$ that are adjacent to each other after sorting. Since $c_1$ is larger in height than in width and weight, we recombine the height of the larger combination with the width and weight of the smaller combination, i.e., we recombine the height of $rc_1.he$ with the $rc_2.wi$ and $rc_2.wt$ to get a more appropriate combination $rc_3=(rc_1.he,rc_2.wi,rc_2.wt)$,\ie$(4.0,2.0,4.0)$. For the second case, as shown in Fig. 4b, since the width of the vehicle $c_2=(4.0,2.0,5.0)$ is smaller compared to the height and weight, we recombine the height and weight of $rc_4=(4.0,3.0,5.0)$ and the width and weight of $rc_5=(3.0,2.0,4.0)$ to get a more appropriate combination $rc_6=(rc_4.he,rc_5.wi,rc_4.wt)$,\ie $(4.0,2.0,5.0)$. 

The details are shown in Algorithm \ref{alg:combination_rematch}. This algorithm takes the cells of road network $\{C_1,C_2,\ldots,C_n\}$ and the restriction combination sets $\{RC_1,RC_2,\ldots,RC_n\}$ as input. In lines 2-7 of the algorithm, each cell $C_i$ is processed individually. In lines 5-6, different types of restrictions are extracted in $R$ and sorted in ascending order. Subsequently, in line 7, the restriction combinations $rc_j$ are rematched, and the rematched combinations are added to the restriction combination set $RC_i$.
Lines 8-19 of the algorithm illustrate the process of the rematch function. Lines 12-18 handle the two cases discussed earlier, with lines 13-15 addressing the first case and lines 16-18 addressing the second case. In lines 14 and 17, $\theta(rc_{temp}, RC)$ is a heuristic function. Since rematching all road restriction combinations would significantly increase the number of restriction combinations, this function aims to filter out those combinations that do not contribute to improving path planning efficiency, thus controlling the number of combinations. The function estimates the number of vehicles that can plan their paths using the \textit{shortest feasible shortcut path} through the rematched road restriction combinations, using this estimate as the basis for filtering. The parameter \(f\) is set to 0.03 in this paper, and \(\text{Total\_vehicles}\) denotes the total number of vehicles in the traffic flow within the cell being processed.



                
    


\stitle{Building Shortcuts}
After completing the filtering of road restriction combinations, we build shortcuts for these retained road restriction combinations.
In each partition, we build a shortcut for each remained road restriction combination between every two boundary vertices, \ie we run modified Dijkstra's algorithm for each restriction combination between every two boundary vertices of a partition, and finally we save the shortest path we calculated by Dijkstra's algorithm as a shortcut. 

The details are shown in Algorithm \ref{alg:building_shortcuts}. This algorithm takes the cells of road network $\{C_1,C_2,\ldots,C_n\}$ and the restriction combination sets $\{RC_1,RC_2,\ldots,RC_n\}$ as input. The algorithm performs a shortcut building operation for each cell $C_i$ in lines 3-9. Specifically, lines 4-8 compute the \textit{shortest feasible shortcut path} for each pair of entry/exit vertices within cell \(C_i\) based on the existing restriction combinations and store this path in the shortcut \(s\).

\begin{algorithm}
  \caption{Building Shortcuts} \label{alg:building_shortcuts}
    \KwIn {$C$: cells of the road network $\{C_1,C_2,\ldots,C_n\}$\\
    $RC$: restriction combinations $\{RC_1,RC_2,\ldots,RC_n\}$}  

    $S$ $\leftarrow$ $\varnothing$\\
    \For{each $C_i$ $\in$ $C$}{

        \For{$u,v$ $\in$ $V^i_{entry/exit}$}{
            $s \leftarrow \varnothing$\\
             \For{each $rc_j$ $\in$ $RC_i$}{
             
                $p \leftarrow$ $shortest\_path(u,v,rc_j)$\\
                //get shortest path from $u$ to $v$ under $rc_i$\\
                $s$.add($p$)\\
             }
             $S \gets$ $S \cup s$
        }

    }
        
\end{algorithm}

\section{Optimization}
In this section, we will provide a detailed explanation of the two optimizations proposed in this paper.
\eat{
In this section, we will describe in detail that how we build shortcuts and some optimizations.

\stitle{Creating Shortcuts}


Previously we grouped the vehicles by using our proposed algorithm and then use these vehicle groups to filter the road restriction combinations of each partition, finally we got the most frequently used restriction combinations of each partition. 
After that, we build shortcuts for each partition one by one. In each partition, we build a shortcut for each reserved road restriction combination between every two boundary vertices, i.e., we run Dijkstra's algorithm with restriction for each restriction combination between every two boundary vertices of a partition, and finally we save the shortest path we calculated by Dijkstra's algorithm as a shortcut. For example, as shown in Figure 5, we have created shortcuts corresponding to a set of restriction combinations between two boundary vertices.
}

\subsection{Shortcuts Merger}

Through observation and analysis, we found that shortcut paths with different restriction combinations may have the same actual roads.
Figure \ref{fig:shortcut_storage_method}a shows an example of two different shortcut paths stored by traditional storage method, these two different shortcut paths that share the same source vertex $v_7$ and destination vertex $v_6$, but have different road restriction combinations. One shortcut path has a restriction combination of (2.0, 2.4, 10), while the other has a combination of (2.5, 2.4, 10). They both follow the same actual path: $v_7 \rightarrow v_4 \rightarrow v_1 \rightarrow v_2 \rightarrow v_6$. 
Using this storage method results in redundant storage, as shortcut paths with different road restriction combinations but the same actual paths save the same actual road multiple times.
In real-world road networks, the actual paths of shortcut paths often have long distances, leading to substantially higher storage requirements compared to other shortcut path information. Therefore, minimizing the redundant storage of identical actual paths across different shortcuts can markedly reduce memory consumption.

Based on the above analysis, we propose a novel shortcut storage method that reduces memory consumption by separating the actual paths from other information of the shortcut paths and using pointers to link other information to the actual paths.
Figure \ref{fig:shortcut_storage_method}b gives an illustration of our proposed shortcut path storage method. Our method stores the source vertex, destination vertex, restriction combinations, and actual paths of shortcut paths separately, and assigns pointers to the shortcut paths to link to their corresponding actual paths. Consequently, shortcut paths with different road restriction combinations but identical actual paths only need to store the actual path once. Given the need to establish a large number of shortcut paths in real-world road networks, our proposed storage method can significantly reduce the memory consumption associated with shortcuts compared to traditional methods.

\subsection{Shortcuts Pre-sorting}
Given a shortcut $\alpha_{s,d} = \{\pi_{s,d}^{rc_0}, \pi_{s,d}^{rc_1}, \dots, \pi_{s,d}^{rc_k}\}$ and a vehicle $c=(he,wi,wt)$. When a vehicle $c$ performs a path search in a shortcut, it must identify the path with the shortest distance among all \textit{feasible paths} within the shortcuts to ensure that it finds the \textit{shortest feasible vehicle path}, thereby achieving the optimal path.
We refer to this operation as \textit{shortcut matching}.

\begin{example}
For example, assuming that there is a vehicle $c=(2.0,2.0,10.0)$. Thus,  as shown in Figure \ref{fig:road_restriction}c, shortcut path $\pi_2$ and $\pi_3$ are the matching shortcut paths for the query. Since $\pi_2$ is the matching shortcut path with the minimum distance, $\pi_2$ is best matching shortcut path.
\end{example}


\stitle{Traditional Shortcut Path Matching Method.} In the query stage, the traditional shortcut matching method fully traverses the set of shortcut paths to find the best matching shortcut path. Obviously, the time complexity of performing shortcut matching in this way is $O(n)$.

\stitle{Our Proposed Shortcut Path Matching Method.}
To speed up shortcut matching, we propose a novel shortcut path matching method. This method pre-sorts the shortcut paths in ascending order based on their distances when building them. Therefore, during shortcut matching, we only need to traverse to the first shortcut path that is feasible for the vehicle to find the optimal matching shortcut path. Using this method for shortcut path matching, the best case occurs when the first shortcut is the optimal matching shortcut, with a time complexity of \(O(1)\). The worst case occurs when the last shortcut path is the optimal matching shortcut path, with a time complexity of \(O(n)\). On average, the optimal matching shortcut path has an equal chance of being at any position in the set, and the probability for each position is \( \frac{1}{n} \). Therefore, the average number of searches is \( \frac{1}{n} \sum_{i=1}^n i \), which is \( \frac{n+1}{2} \) times. Therefore, the time complexity in this case is \(O\left(\frac{n+1}{2}\right)\). Thus, it can be seen that our proposed method has the same time complexity as the traditional method even in the worst case, thereby significantly speeding up the shortcut path matching process.

\eat{
Previously, we introduced the query stage. When processing a query $\langle {S}, {D}, V_{height}, V_{width}, V_{weight} \rangle$, we combine the partitions where the source (S) and destination (D) belong to and $L_{up}$ to form a new graph. Subsequently, we run the bidirectional Dijkstra's algorithm with S as the starting point and D as the destination. During this process, there is a matching issue for shortcuts, namely, how to choose the appropriate shortcut for $\langle V_{height}, V_{width}, V_{weight} \rangle$.
In theory, the optimal shortcut for the vehicle would be the one whose height, width, and weight is greater than or equal to the corresponding dimensions of the vehicle, and each component is closest to the vehicle's dimensions. However, this method has certain limitations.
For example, if a vector of a vehicle is $\langle2,2,4\rangle$, and there exist two shortcuts: $\langle3,2,4\rangle$ and $\langle2,3,4\rangle$ between two boundary vertices when the vehicle is choosing shortcuts. Both shortcuts satisfy the condition that each component is greater than or equal to the corresponding dimension of the vehicle. However, it becomes challenging to determine which one is closest to the vehicle's dimensions. The simplest way to address this situation is to choose the shortcut with the shortest distance. However, comparing distances for each shortcut selection can be time-consuming.

In order to address the above issues, we propose a method called "shortcuts ordering." We sort the shortcuts between two boundary vertices in ascending order based on their distances. When selecting a suitable shortcut, we only need to traverse from the shortcut with the smallest distance, and the optimal shortcut is found when the first shortcut is encountered where each component of height, width, and weight is greater than or equal to the corresponding dimensions of the vehicle.
For example, in figure 5, we first sort the shortest path $path_i$ $(i=1,2,\dots,7)$ in ascending order, with a smaller path's ordinal number indicating a shorter distance, and then sort the restriction combinations pointing to the shortest path by the path's serial number. If there is a vehicle that needs to match a shortcut, we start traversing from the restriction combination pointing to $path_1$ to find the first shortcut with height, width and weight greater than or equal to the car to complete the match.
This method does not need to compare distances of the shortcuts, which saves a lot of time and therefore improves the efficiency of the query phase.
}

\begin{figure}
    \centering
    \includegraphics[width=3.0in]{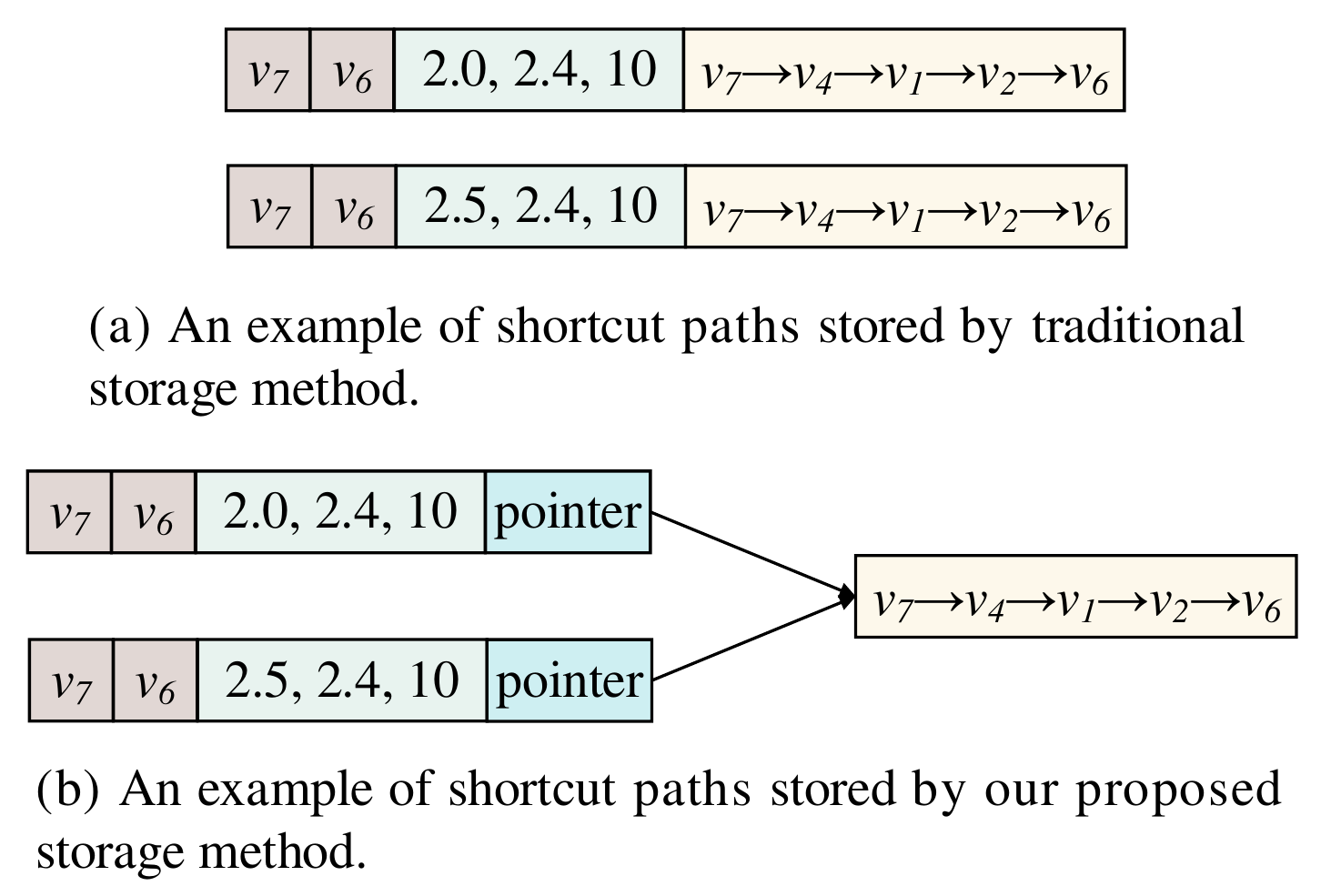}
    \caption{An illustration to show how different storage methods store shortcut paths.}
    \label{fig:shortcut_storage_method}
\end{figure}

\section{Experiments}
\label{sec:experiment}
All algorithms are implemented in 
C++ and all the experiments are conducted on a linux machine with Intel(R) Xeon(R) Gold 6248R CPU @ 3.00GHz and 96 GB RAM.


\subsection{Experimental Setup}
\stitle{Datasets.} As shown in table \ref{tab:dataset}, we use six real-world road networks which are all obtained from OpenStreetMap~\footnote{https://www.openstreetmap.org/} in our experiment: Shenyang (SY), Tokyo (TYO), Paris (PAR), London (LON), Los Angeles (LA), and San Francisco (SF).
Each road network is undirected, and consists of a set of vertices that represent intersections and a set of edges that represent a road segment. The weight on each edge denotes the distance of the road segment.
Meanwhile, we randomly assigns some road restrictions to each edge in these road networks. 
\chz{As there are currently no publicly available datasets containing road restriction information, we conducted sampling in a selected area of Shenyang, China, to collect and analyze data on local road restrictions. Based on the statistical results of this sampling, we assigned realistic road restrictions to each edge within the road networks, ensuring that the experimental data closely reflects real-world conditions.}

For traffic flow data, we generated the data based on commonly seen vehicles (high-selling vehicles) on the roads.

\stitle{Comparison Methods.} 
As previously analyzed, 
existing index-based path planning algorithms are unable to generate feasible paths in road networks with restrictions, therefore, we compare \methodname\ with the two naive methods mentioned in Section 3 and the modified Dijkstra algorithm.
\begin{itemize}[leftmargin=*, align=left]
\item \textbf{Dijkstra}: The modified Dijkstra's algorithm, which searches only the edges that are feasible for a vehicle, can obtain the \textit{shortest feasible vehicle path}.
\item \textbf{Random}: The method which randomly stores \textit{shortest feasible paths} under a certain proportion of road restriction combinations when building shortcuts.
\item \textbf{All}: The method which stores \textit{shortest feasible paths} of all road restriction combinations when building shortcuts.
\item \textbf{TRAPP}: Our proposed path planning algorithm for road networks with restrictions.

\end{itemize}

\stitle{Query Set.} For each dataset, we randomly select two vertices that do not belong to the same cell as the source and destination vertices, respectively, and randomly select a vehicle from the traffic flow data as the generation rule for a path query. Based on this rule, we generate 300 path queries as the query set.

\stitle{Evaluation Metrics.}
\chz{We use the following metrics to evaluate path planning performance: (1) \textit{Efficiency}, measured by the average time per query in path planning. (2) \textit{Space Utilization}, assessed by the number of stored paths in the shortcut structure, where lower storage usage indicates better efficiency. (3) \textit{Effectiveness}, evaluated through the Planned Path Error Rate, the Path Query Failure Rate, and the Proportion of Optimal Paths.}

\eat{
\begin{definition}[\textbf{Path error rate}]
Given a path query $Q_r = ( v_{src}, v_{des}, {c} )$, a planned path $p \in P_r$ and the optimal path $p_o$. Path error rate $\delta$ is defined as $$\delta = (dist(p)-dist(p_o))/dist(p_o)$$
\end{definition}

\begin{definition}[\textbf{Path query failure}]
Give a path query $Q_r = ( v_{src}, v_{des}, {c} )$ and a feasible path set for $Q_r$. $p$ is denoted as a planned path for $Q_r$. Path query failure is defined as 
$$dist(p)=\infty \text{ \ \ } if\ P_r \neq \varnothing $$
\end{definition}
}

\begin{table}
  \caption{Dataset description}
  \label{tab:dataset}
  \centering
  \begin{tabular}{c@{\extracolsep{30pt}}c@{\extracolsep{30pt}}c}
    \toprule
    Road Network  &  \#Vertices &  \#Edges\\
    \midrule
    Shenyang (SY) & 47,773 & 105,378 \\
    Tokyo (TYO) & 94,016 & 212,754 \\
    Paris (PAR) & 137,411 &  292,038 \\
    London (LON) & 155,221 & 326,882 \\
    Los Angeles (LA) & 161,384 & 346,764 \\
    San Francisco (SF) & 300,617 &  633,958\\
  \bottomrule
\end{tabular}
\end{table}

\begin{figure}[htbp]
  \centering
  \begin{minipage}[b]{0.232\textwidth}
    \centering
    \includegraphics[width=\textwidth]{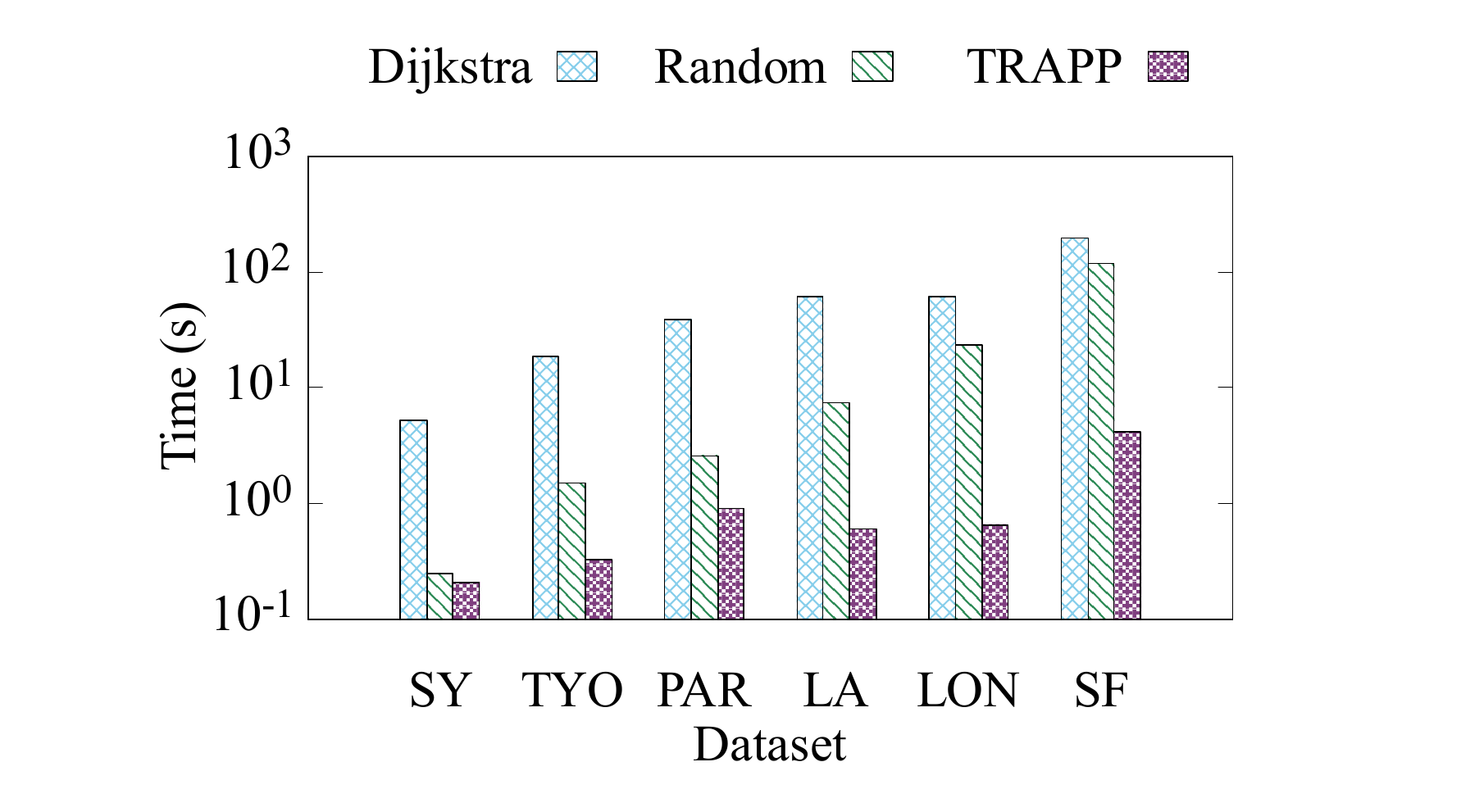} 
    \vspace{-13pt} 
    \caption{Path planning time comparison.}
    \label{fig:time_comparison}
  \end{minipage}
  \hspace{0.003\textwidth} 
  \begin{minipage}[b]{0.232\textwidth}
    \centering
    \includegraphics[width=\textwidth]{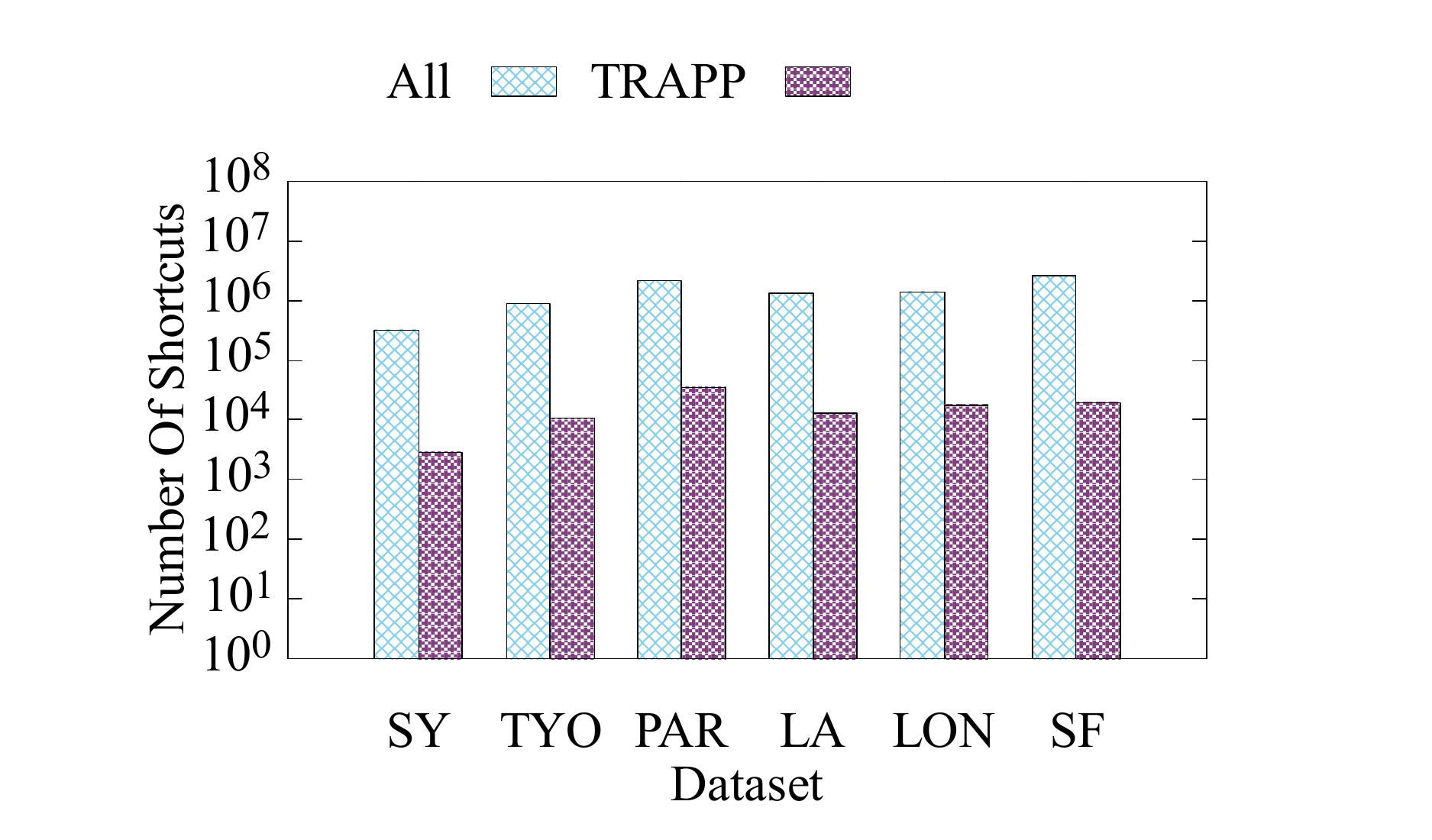} 
    \vspace{-13pt} 
    \caption{Comparison of number of shortcut paths.}
    \label{fig:storage_test}
  \end{minipage}
\end{figure}

\subsection{Path Planning Time Evaluation}
In this experiment, we evaluate the efficiency of path planning by comparing the runtime of the modified Dijkstra's algorithm, \randomMethod and our method \methodname on the six datasets: SY, TYO, PAR, LON, LA, and SF, as shown in Table \ref{tab:dataset}. For each dataset, we use the same query set for different methods to conduct comparative experiments, and take the average query time of a single query in the query set as the experimental result.

The experimental results are shown in Figure 5. From the results, we can see that for all datasets, the path planning time for \methodname is significantly lower than that of the modified Dijkstra algorithm. In the best cases, the query speed of \methodname is 426 times faster than that of Dijkstra, because Dijkstra's algorithm needs to search a large number of edges. Meanwhile, the query time for \randomMethod falls between Dijkstra and \methodname. This is because the failure rate of the \randomMethod method is much higher than that of \methodname, and when path planning is failed, \randomMethod uses the Dijkstra algorithm for searching, which leads to an increase in its query time.


In summary, our proposed method \methodname effectively ensures the efficiency of path planning.





\begin{figure}[htbp]
  \centering
  \begin{minipage}[b]{0.232\textwidth}
    \centering
    \includegraphics[width=\textwidth]{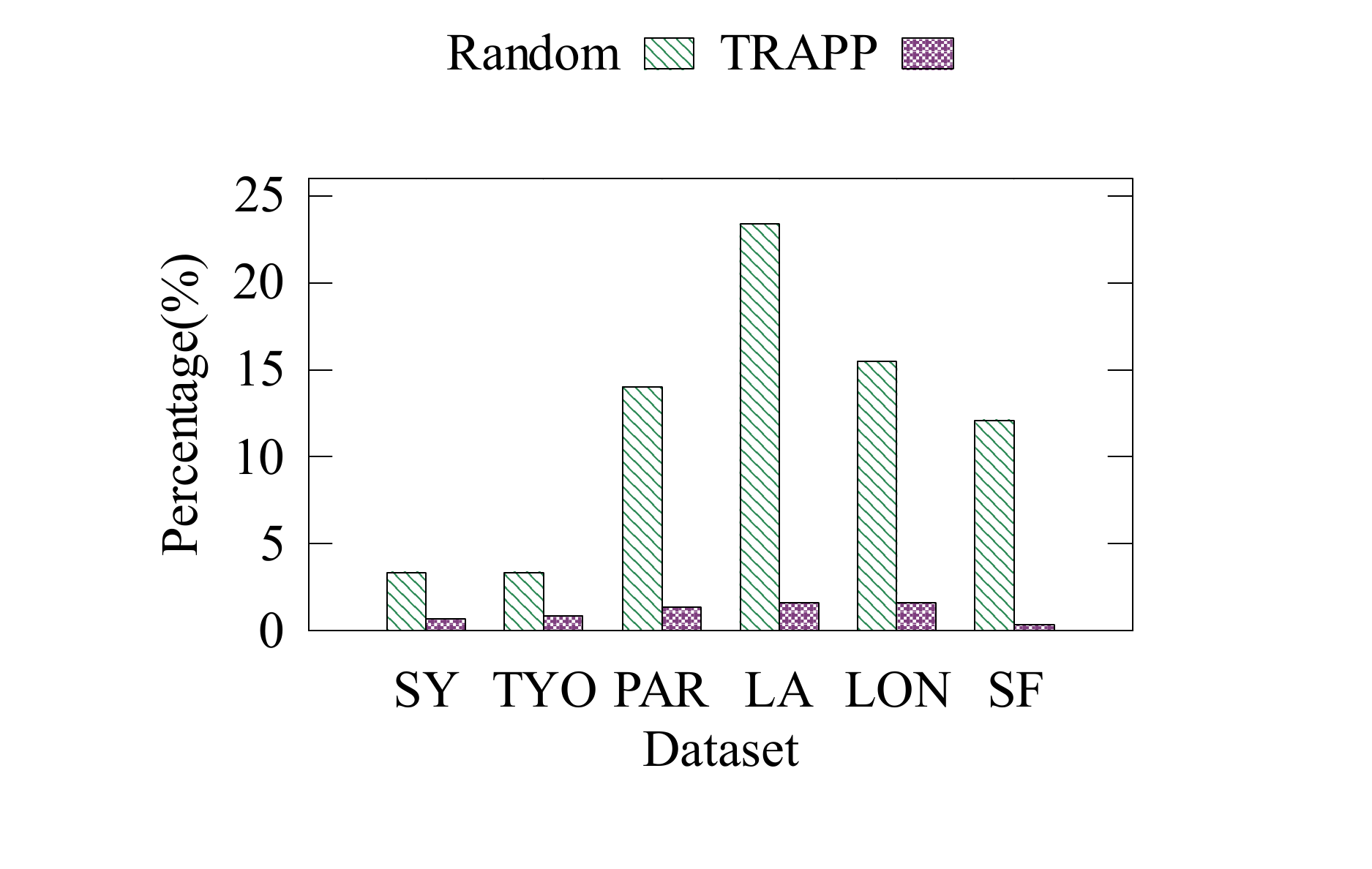} 
    \vspace{-25pt} 
    \caption{Comparison of path error rate.}
    \vspace{-1pt}
    \label{fig:path_error_test}
  \end{minipage}
  \hspace{0.003\textwidth} 
  \begin{minipage}[b]{0.232\textwidth}
    \centering
    \includegraphics[width=\textwidth]{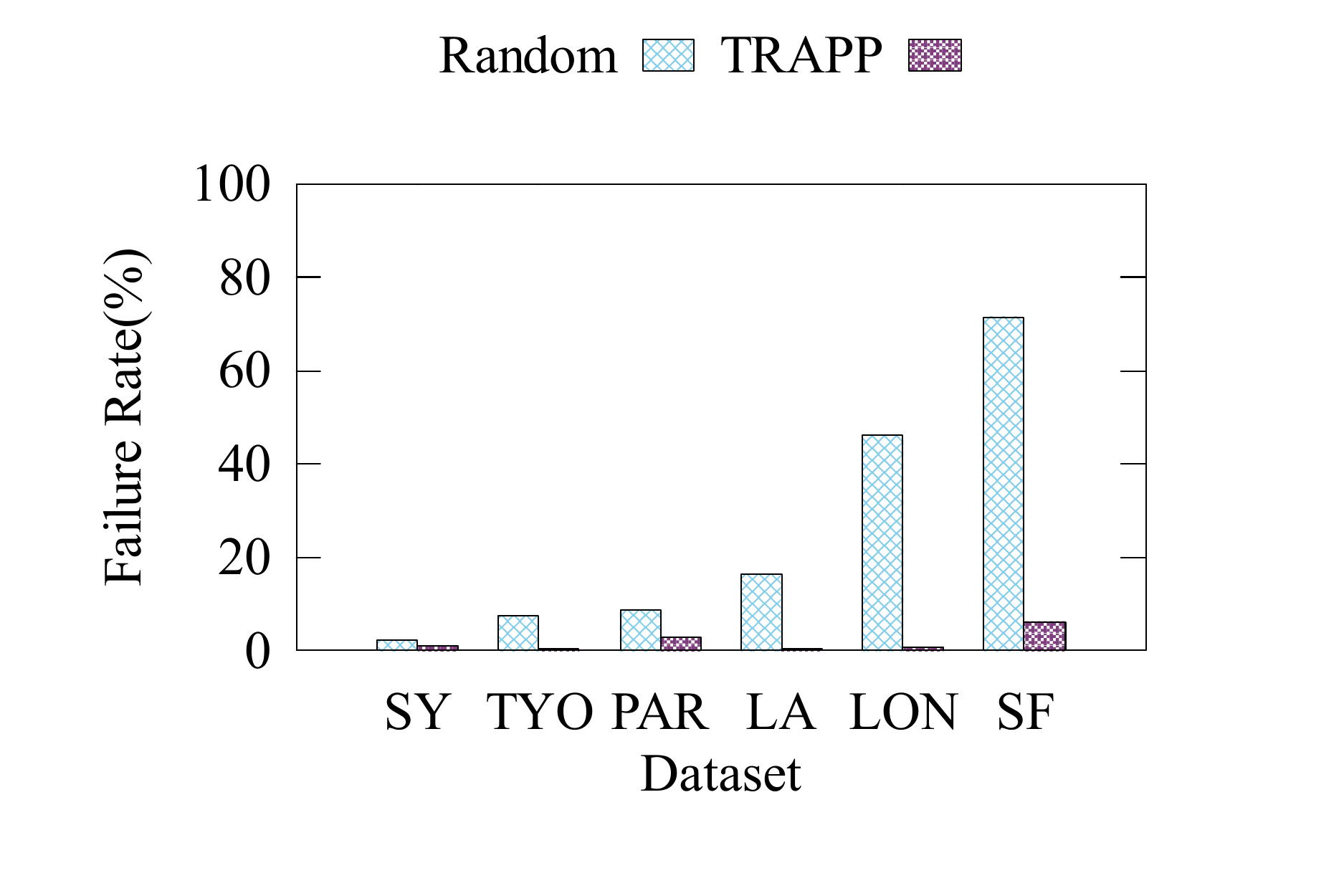} 
    \vspace{-25pt} 
    \caption{Comparison of failure rate.}
    \vspace{-1pt}
    \label{fig:failure_rate}
  \end{minipage}
  
\end{figure}

\subsection{Shortcut Storage Evaluation}
To demonstrate that the shortcuts built by \methodname can effectively control the number of actual shortest paths, we conduct experiments to test the number of shortcuts built by our proposed method \methodname and compare it with \allShortcutMethod on six different datasets.

As shown in Figure \ref{fig:storage_test}, in all datasets, the number of actual shortest paths of shortcuts built by \methodname is significantly less than that of \allShortcutMethod. In the best case, the number of actual shortest paths preserved by \methodname is only 0.6\% of those preserved by \allShortcutMethod.\chz{} These experimental results show that \methodname effectively reduces the number of actual shortest paths of built shortcuts, thereby decreasing the computational and memory overhead.



\begin{figure}[htbp]
  \centering
  \begin{minipage}[b]{0.232\textwidth}
    \centering
    \includegraphics[width=\textwidth]{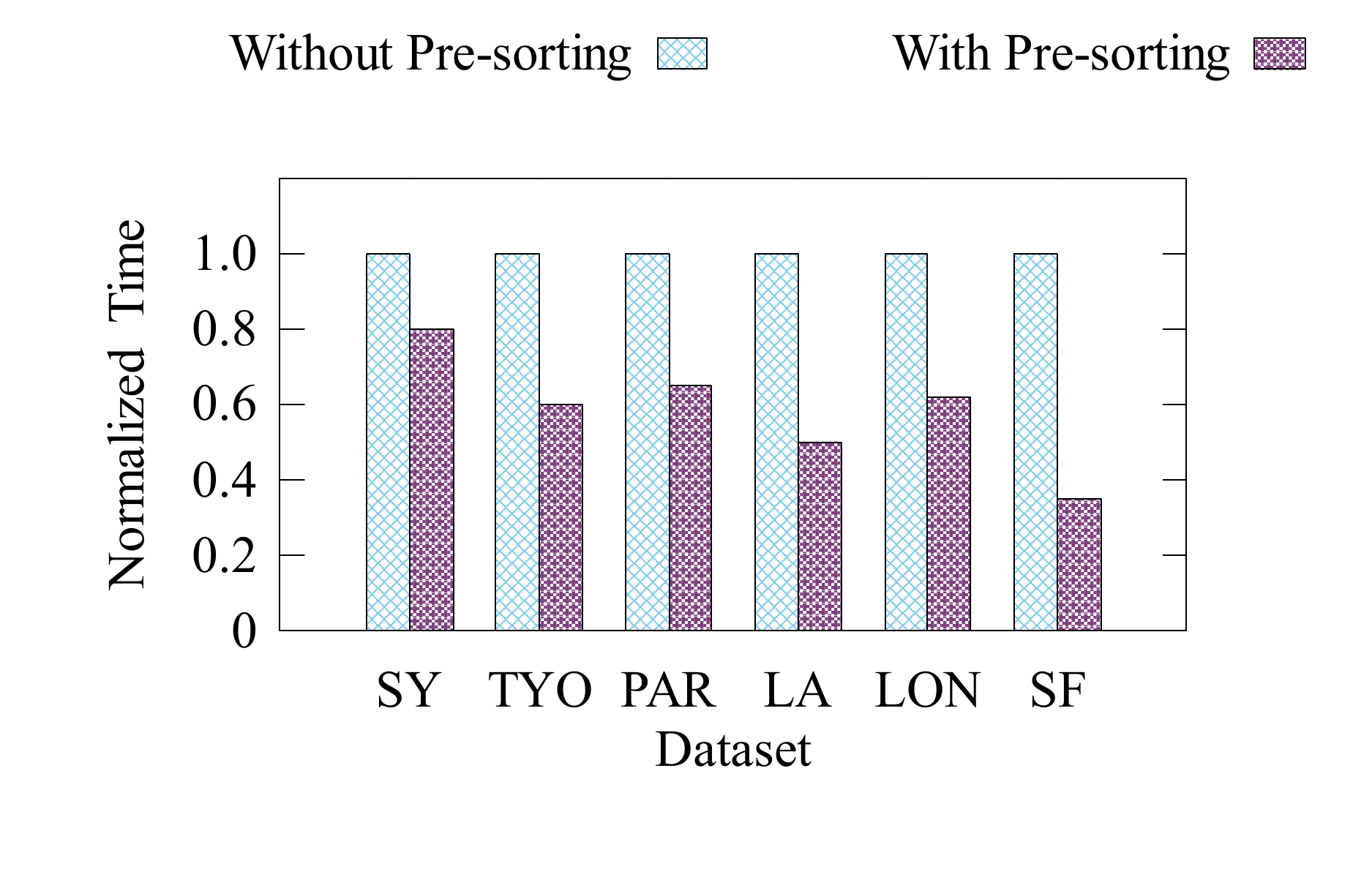} 
    \vspace{-17pt} 
    \caption{Shortcut path matching time comparison.}
    \label{fig:matching_time}
  \end{minipage}
  \hspace{0.003\textwidth} 
  \begin{minipage}[b]{0.232\textwidth}
    \centering
    \includegraphics[width=\textwidth]{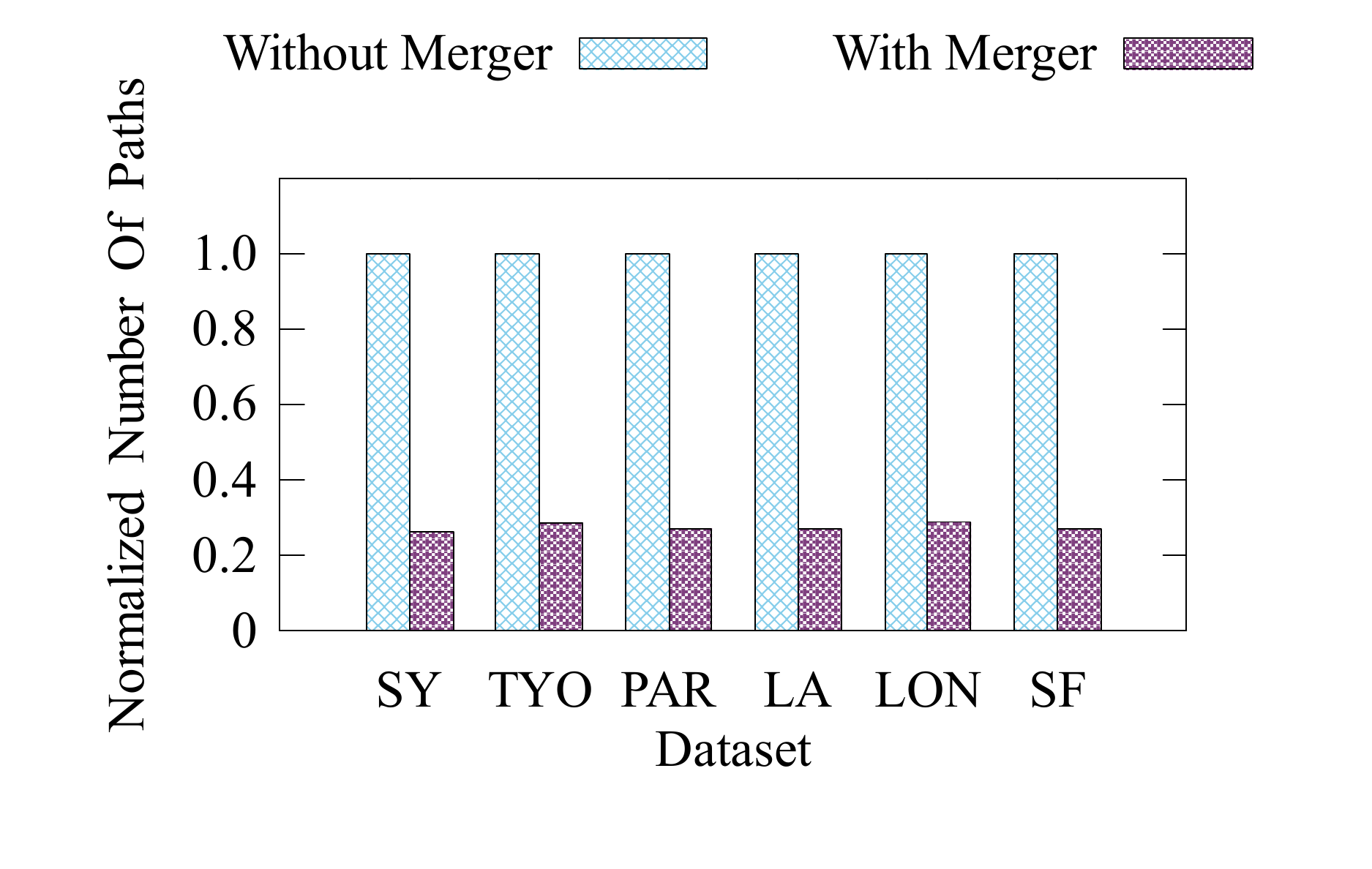} 
    \vspace{-17pt} 
    \caption{Comparison of the number of actual paths.}
    \label{fig:actual_path_number}
  \end{minipage}
\end{figure}
\subsection{Planned Path Error Rate Evaluation}


To demonstrate the effectiveness of \methodname in path planning, we conduct experiments comparing the planned path error rate of \methodname and \randomMethod. The planned path error rate is defined as the error rate between the path planned by the algorithm and the optimal path for a given path planning query $q$, \ie
$$\delta = (dist(p)-dist(p^*))/dist(p^*)$$
where $p$ is the planned path and $p^*$ is the optimal path for query $q$.
In these experiments, Dijkstra's algorithm serves as the benchmark to obtain the feasible shortest path. For each dataset, we use the same query set across different methods and calculate the average planned path error rate of each individual query within the set, using this as the experimental result. 

As shown in Figure \ref{fig:path_error_test}, the planned path error rate of \methodname is significantly lower than that of \randomMethod. In nearly all datasets, the average path distances planned by \methodname are within 5\% of the shortest path.
This demonstrates the excellent path planning performance of \methodname, allowing the majority of vehicles to get an acceptable path even when the optimal path is not found.

\subsection{Comparison of Path Query Failure Rate}
Path query failure occurs when a path query theoretically has at least one solution but cannot find one during practical execution. The path query failure rate is defined as the ratio of the number of failed queries to the total number of queries for a given query set $Q$.
A higher path query failure rate indicates a less effective method. To validate the efficacy of our proposed approach, we conduct a comparative analysis of path query failure rates between \methodname and \randomMethod in our experiment.

\chz{}As shown in Figure \ref{fig:failure_rate}, the failure rate of \methodname is significantly lower than that of \randomMethod. In all the datasets, the failure rate of \methodname is within 5\%, and in the best cases, the failure rate of \methodname can be below 1\%. Our experimental results indicate that \methodname can plan feasible paths for the vast majority of vehicles.



\subsection{Shortcut Pre-sorting Evaluation}



To verify the effectiveness of shortcut pre-sorting, we conduct an experiment to evaluate the matching time of shortcuts with setting $K=30$. we compare the shortcuts matching time of the traditional matching method with that of our proposed matching method. 

Figure \ref{fig:matching_time} shows the normalized result of shortcuts matching time, our method reduces the time by 20\% to 75\% compared to the traditional method. This significant reduction demonstrates that the shortcut pre-sorting method can substantially improve the processing efficiency of path queries, thereby enhancing overall performance in practical applications.

\subsection{Shortcut Merger Evaluation}



We evaluated the efficiency of our proposed shortcut storage method by comparing its number of stored paths with that of the traditional shortcut storage method. 

After normalization, as shown in Figure \ref{fig:actual_path_number}, across all datasets, our proposed method significantly reduces redundant actual path storage by 60\% to 80\% compared to the traditional method. This substantial reduction in memory consumption shows the effectiveness and efficiency of our proposed storage method, making it suitable for application in large-scale road networks.

\begin{figure}
    \centering
    \includegraphics[width=0.7\linewidth]{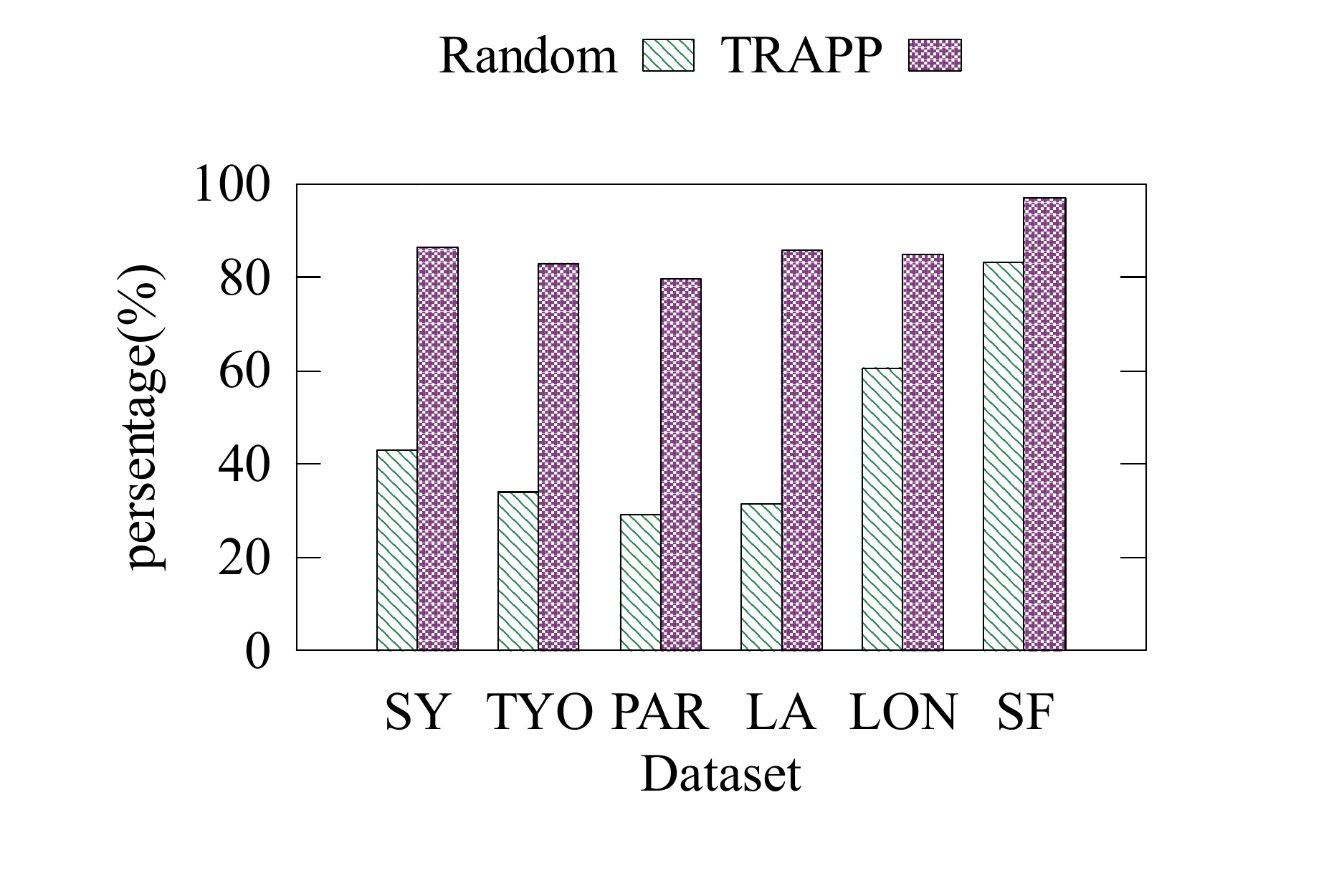}
    \vspace{-15pt}
    \caption{Proportion of Optimal Paths.}
    \vspace{-15pt}
    \label{fig:Optimal_Paths}
\end{figure}

\begin{figure*}[t] 
  \centering
  \begin{minipage}[t]{0.3\textwidth}
    \centering
    \includegraphics[width=\textwidth]{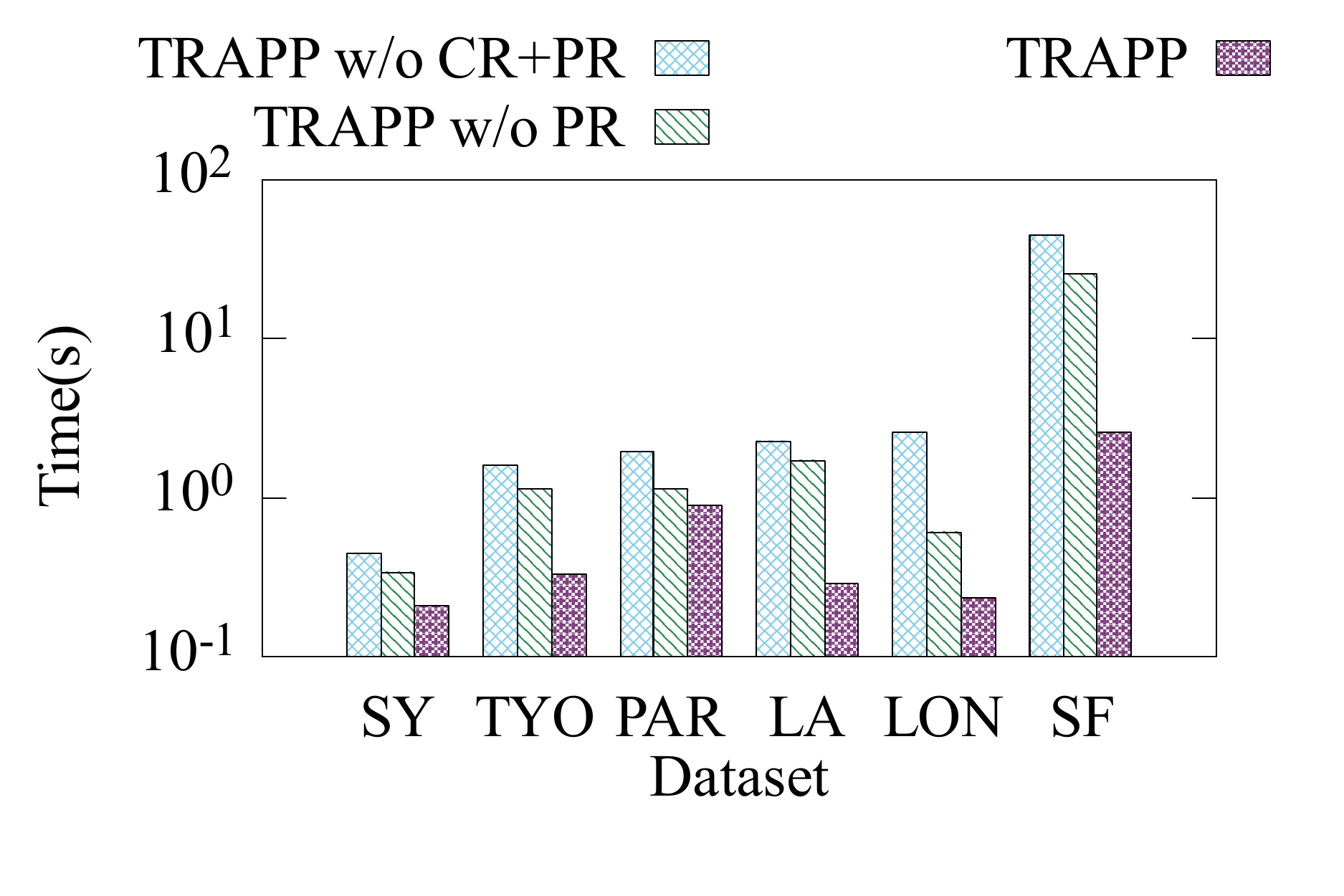} 
    \vspace{-22pt} 
    \caption*{(a) Path planning time.}
    \label{fig:ablation_time}
  \end{minipage}
  \begin{minipage}[t]{0.3\textwidth}
    \centering
    \includegraphics[width=\textwidth]{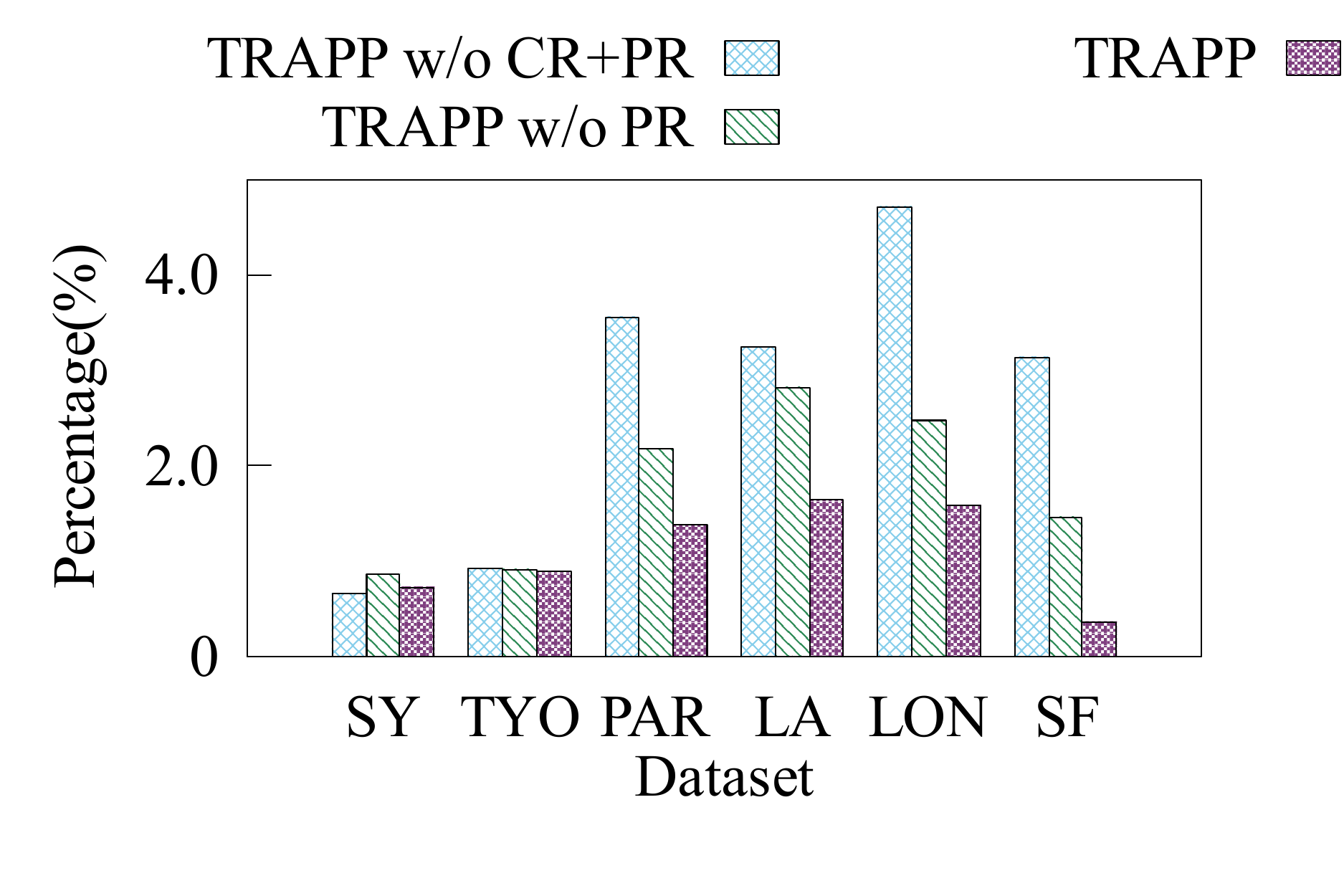} 
    \vspace{-22pt} 
    \caption*{(b) Path error rate.}
    \label{fig:ablation_error}
  \end{minipage}
  \begin{minipage}[t]{0.3\textwidth}
    \centering
    \includegraphics[width=\textwidth]{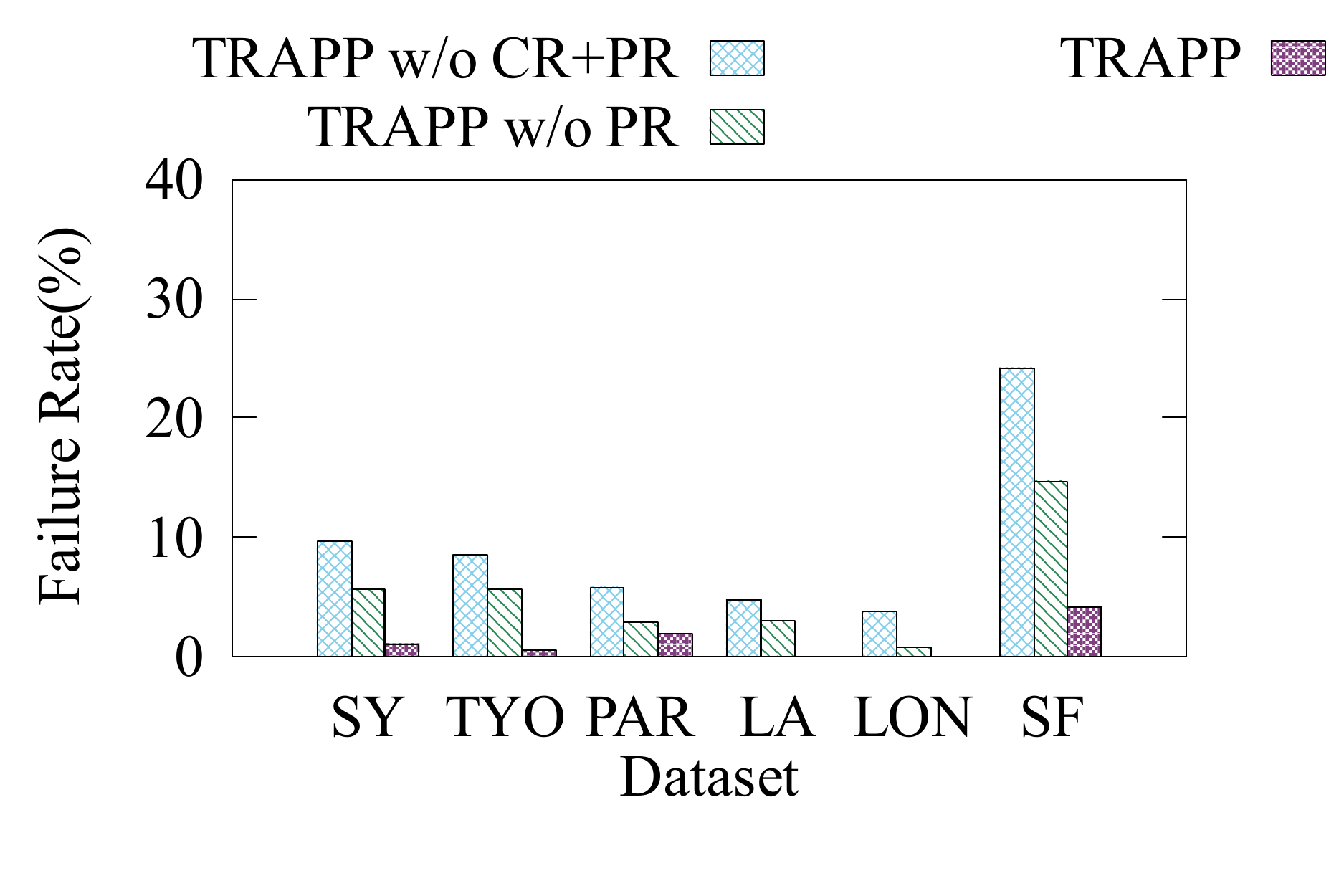} 
    \vspace{-22pt} 
    \caption*{(c) Failure rate.}
    \label{fig:ablation_failure}
  \end{minipage}
  \begin{minipage}[t]{0.65\textwidth} 
    \centering
    \begin{minipage}[t]{0.45\textwidth} 
      \centering
      \includegraphics[width=\textwidth]{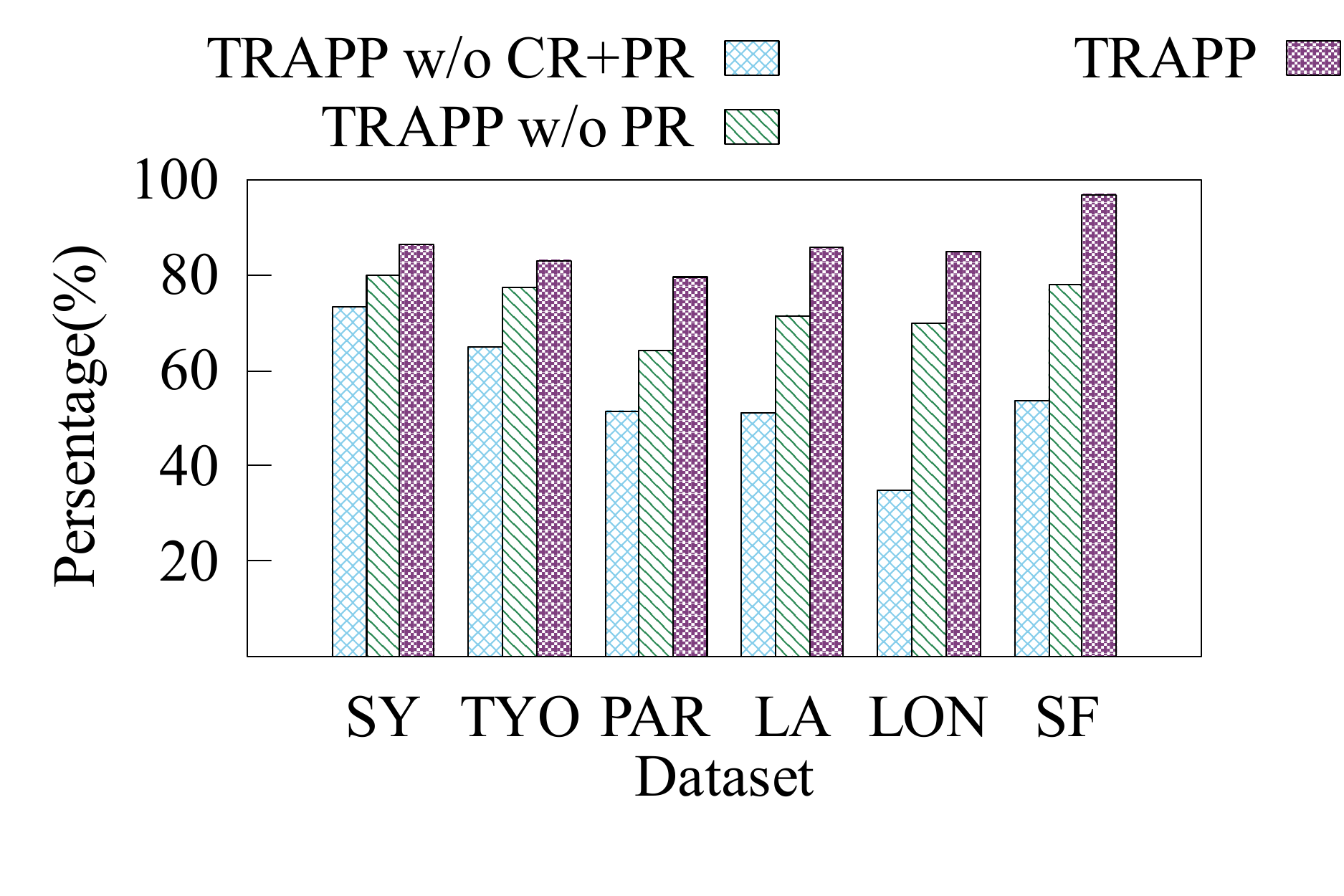} 
      \vspace{-22pt} 
      \caption*{(d) Proportion of Optimal Paths.}
      \label{fig:ablation_optimal}
    \end{minipage}
    \hfill
    \begin{minipage}[t]{0.45\textwidth} 
      \centering
      \includegraphics[width=\textwidth]{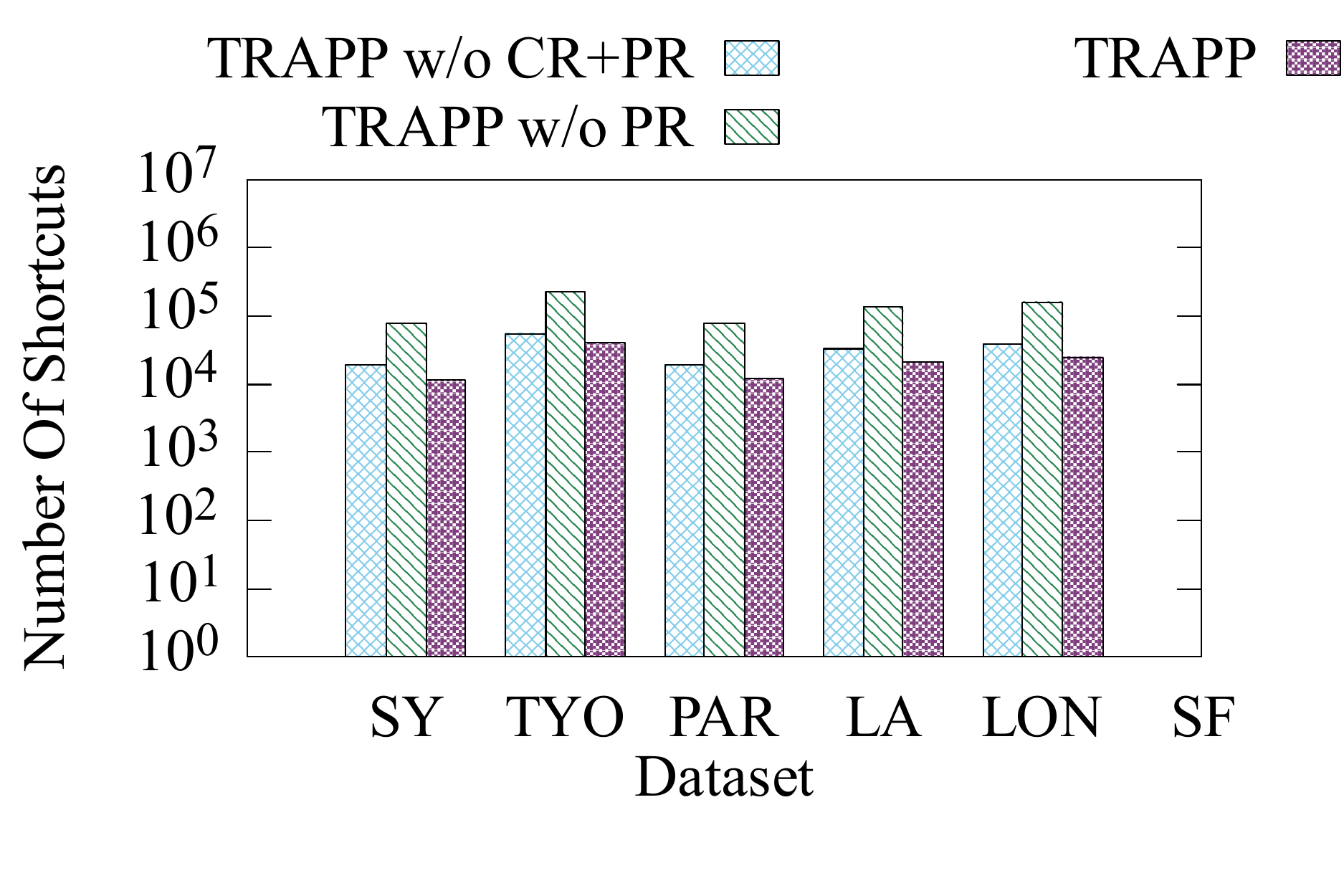} 
      \vspace{-22pt} 
      \caption*{(e) Comparison of number of shortcut paths.}
      \label{fig:ablation_shortcut_usage}
    \end{minipage}
  \end{minipage}
  \caption{The efficiency of different optimizations.
  }
  \label{fig:ablation_exp}
\end{figure*}

\subsection{Evaluation of the Proportion of Optimal Paths}
To comprehensively evaluate our proposed method \methodname, we conduct experiments on six different road networks to measure the optimal solution proportion of \methodname. The optimal solution proportion is defined as the proportion of the number of queries that achieve the optimal solution to the total number of queries for a given query set $Q$. A higher proportion of optimal paths indicates more frequent planning of optimal solutions, thereby demonstrating superior path planning performance. 

As shown in Figure \ref{fig:Optimal_Paths}, across different datasets, the optimal solution proportion of our proposed \methodname method remains consistently high. In the best-case scenario, the optimal solution proportion reaches 96.8\%, while even in the worst-case scenario, it \eat{maintains}remains at 79.6\%. This demonstrates the excellent path planning effectiveness of \methodname. Additionally, in all datasets, the optimal solution proportion of \methodname is significantly higher than that of \randomMethod. For instance, in the SF dataset, the optimal solution proportion of \methodname is nearly seven times higher than that of \randomMethod, despite both methods preserving a similar number of shortest paths. This clearly illustrates that our proposed method effectively filters out rarely used road restriction combinations and retains more frequently used ones, thereby ensuring excellent path planning performance with minimal computational and memory overhead.

\subsection{Ablation Study}

To verify the contribution of each component in \methodname, we conduct experiments on the following variants of \methodname.
\begin{itemize}[leftmargin=*, align=left]
\item \textbf{\methodname w/o PR}: This variant of \methodname excludes the Partition-aware Refinement (PR) component.
\item \textbf{\methodname w/o CR+PR}: This variant of \methodname excludes the Partition-aware Refinement (PR) and Combination Rematch (CR) components, essentially utilizing only the k-means algorithm.
\end{itemize}


The ablation experiments evaluated the path planning time, path distance, failure rate, optimal solution ratio, and the number of preserved actual shortest paths for \methodname, \methodname w/o PR, and \methodname w/o CR+PR across different datasets. Figure \ref{fig:ablation_exp}a shows that \methodname consistently achieved the shortest path planning times across all datasets. This is because \methodname has the lowest path planning failure rate among the three methods, resulting in fewer regressions to the Dijkstra algorithm. Figure \ref{fig:ablation_exp}b indicates that all three methods maintain low path planning errors when the optimal solution cannot be found. Figure \ref{fig:ablation_exp}c demonstrates that the complete \methodname has the lowest failure rate, significantly lower than \methodname w/o CR and \methodname w/o CR+PR. Additionally, \methodname w/o CR also has a significantly lower failure rate than \methodname w/o CR+PR, highlighting the effectiveness of the CR and PR components. Figure \ref{fig:ablation_exp}d presents the experimental results of the optimal solution ratio for the three methods, showing that \methodname has a significantly higher optimal solution ratio compared to \methodname w/o CR and \methodname w/o CR+PR, with the performance improving as more components are included. Figure \ref{fig:ablation_exp}e shows that \methodname preserves the fewest actual shortest paths, while \methodname w/o CR preserves the most, because the PR component retains more road restriction combinations. The experimental results also demonstrate that the PR component effectively controls the number of preserved road restriction combinations.

In summary, the experimental results of the ablation experiments demonstrate that the CR and PR components significantly enhance both the effectiveness and efficiency of \methodname.
\section{Related Work}
\label{sec:related_work}
\stitle{Non-index-based Path Planning Algorithms}. Dijkstra's algorithm \cite{Dijkstra59} is a classic shortest path algorithm that uses a greedy strategy to perform a full graph search to obtain the single-source shortest path on a non-negative weighted graph. Bellman-Ford algorithm~\cite{bellman58,ford56} is capable of solving the shortest path problem in graphs with negative edge weights. Johnson algorithm~\cite{johnson1977efficient} is a combination of the previous two algorithms. Floyd-Warshall~\cite{floyd1962algorithm,warshall1962theorem} algorithm utilizes dynamic programming for path planning and A* search algorithm~\cite{AstarSearch} incorporates heuristic methods to reduce unnecessary searches.~\cite{non_index_experiment_1} provided a theoretical summary of representative non-index-based algorithms and conducted an experimental evaluation.

\stitle{Index-based Path Planning Algorithms}.
Graph indexing has been widely studied to enhance efficiency. Index-based methods typically involves pre-computing the shortest paths between some vertices and building indices to accelerate path planning. CH~\cite{CH} and CRP~\cite{CRP} are the state-of-the-art path planning methods for road networks, where CH builds shortcuts for the neighbors of the contracted vertices in a specific order and CRP constructs a multi-level graph structure based on the given road network and builds shortcuts between the entry and exit vertices of each cell at each level. AH~\cite{AH}, which is similar to CH, leverages spatial information to construct a hierarchical structure and build multiple shortcuts within it. EDP~\cite{EDP} builds dynamic indices within the graph, enabling efficient edge-constrained shortest path queries on dynamic graphs. G*-tree~\cite{G*-tree} is an efficient hierarchical indexing structure that offers advantages such as good scalability. Xiao, Yanghua, et al.~\cite{symmetry} design a novel path planning algorithm, which significantly reduces the indexing space consumption by exploiting the widespread symmetry in the graph, while ensuring the efficiency of path planning. H2H-index~\cite{H2H}, which combines the advantages of 2-hop labeling and hierarchy, builds indices based on tree decomposition and maintains a hierarchical structure among all vertices in the graph. This method addresses the problem of large search spaces in hierarchical indexing structures for long-distance queries, while also mitigating the high computational overhead in hop-based indexing schemes for short-distance queries. Zhang Y, et al.~\cite{CHandH2Hsubboundedness} introduces the concept of relative subboundedness, studies the boundedness of CH and H2H, and proposes an incremental algorithm IncH2H, which can quickly update indices on dynamic road networks. Inspired by H2H, the LSD-index~\cite{LSD-index}, which is based on tree decomposition, can perform label-constrained shortest path queries in a very short time while also reducing the memory consumption of the indices. Building upon the shortest-distance query processing methods PLL~\cite{PLL} and CTL~\cite{CTL}, ~\cite{MLL} extends their capabilities to efficiently handle shortest-path queries and introduces a novel path planning method, MLL, which achieves a significantly faster query speed.
Some literature~\cite{path_planning_experiment_1,path_planning_experiment_2} conduct evaluations on the performance of the representative path planning algorithms.

Real-world road networks have numerous different restrictions, yet existing algorithms do not take these into account. Therefore, it is crucial to study path planning algorithms that consider road network restrictions. \methodname addresses this issue effectively, enabling the planning of passable, optimal, and acceptable paths for the majority of vehicles in road networks with restrictions, while significantly reducing the computational and storage overhead of the indices.

\section{CONCLUSION}
\label{sec:conclusion}
In this paper, we define the path planning problem under road restrictions and analyze the combinatorial explosion problem in existing methods applied to road networks with restrictions. Subsequently, we propose \methodname, a path planning algorithm tailored for these networks. The core idea of \methodname is to utilize traffic flow information within the road network to filter road restriction combinations, thereby addressing the substantial computational and memory overhead caused by the combinatorial explosion. Experimental results validate the effectiveness of \methodname in path planning, as well as its low memory and computational overhead.

\bibliographystyle{ACM-Reference-Format}  
\bibliography{sample} 
\end{document}